\declaretheorem{property}
\declaretheorem{observation}
\declaretheorem{recurrence}
\newcommand{\NAME}[1]{{\operatorname{\sf #1}}}
\newcommand{\FUNCTIONx}[1]{%
  \@ifnextchar(
  {\ensuremath{{#1}}}%
  {\FNx{#1}}%
}
\newcommand{\FUNCTIONxx}[1]{%
  \@ifnextchar(
  {\ensuremath{{#1}}}%
  {\FNxx{#1}}%
}
\newcommand{\FNx}[2]{\ensuremath{{#1}({#2})}} 
\newcommand{\FNxx}[3]{\ensuremath{{#1}({#2}, {#3})}}
\newcommand{\NAMEDFNx}[2]{\FUNCTIONx{\NAME{#1}}{#2}}
\newcommand{\NAMEDFNsubx}[3]{\ensuremath{\NAME{#1}_{#2}({#3})}}
\newcommand{\heavier}{\NAMEDFNx{heavier}}
\newcommand{\lighter}{\NAMEDFNx{lighter}}
\newcommand{\holes}{\NAMEDFNx{holes}}
\newcommand{\lightholes}{{H}}
\newcommand{\probName}[1]{{\mbox{{#1}}}\xspace}
\newcommand{\threeWCST}{\probName{{\small\sf 3WST}}}
\newcommand{\twoWCST}{\probName{{\small\sf 2WST}}}
\newcommand{\twoWCDT}{\probName{{\small\sf 2WDT}}}
\newcommand{\LDT}{\probName{{\small\sf LDT}}}
\newcommand{\threeWCSTs}{\threeWCST{}s\xspace}
\newcommand{\twoWCSTs}{\twoWCST{}s\xspace}
\newcommand{\twoWCDTs}{\twoWCDT{}s\xspace}
\newcommand{\LDTs}{\LDT{}s\xspace}
\newcommand{\twoWCDTsintitle}{\probName{{\sf 2WDTs}}}
\newcommand{\heaviest}{\NAMEDFNsubx{heaviest}}  
\newcommand{\setyes}[2]{{#1}^{\scriptstyle \NAME{yes}}_{\scriptstyle #2}}
\newcommand{\setno}[2]{{#1}^{\scriptstyle \NAME{no}}_{\scriptstyle #2}}
\newcommand{\weightName}{w}
\newcommand{\bweightName}{\tilde \weightName}
\newcommand{\pweightName}{\weightName^*}
\newcommand{\weight}[1]{{\weightName({#1})}}
\newcommand{\bweight}[1]{{\bweightName({#1})}}
\newcommand{\pweight}[1]{{\pweightName({#1})}}
\newcommand{\cost}{\NAMEDFNx{cost}}
\newcommand{\pcost}{\FUNCTIONx{\NAME{cost}^*}}
\newcommand{\Dictionary}{{D}}
\newcommand{\AdmSubsets}{\mathcal{A}}
\newcommand{\acost}[1]{\NAMEDFNsubx{cost}{\AdmSubsets} {#1}}
\newcommand{\Matrix}{{M}}
\newcommand{\test}[2]{\ensuremath{\langle {} \mathrel{#1} #2 \rangle}}
\newcommand{\undefsymbol}{\bot}
\newcommand{\calC}{\mathcal C}
\newcommand{\calF}{\mathcal F}
\newcommand{\calH}{\mathcal H}
\newcommand{\calL}{\mathcal L}
\newcommand{\calR}{\mathcal R}
\newcommand{\calS}{\mathcal S}
\newcommand{\braced}[1]{{ \left\{ #1 \right\} }}
\newcommand{\uNAME}{u}
\newcommand{\uOF}[1]{\uNAME_{#1}}
\newcommand{\uA}{\uOF 1}
\newcommand{\uB}{\uOF 2}
\newcommand{\uC}{\uOF 3}
\newcommand{\uF}{\uOF 4}
\newcommand{\uFp}{\uOF 5}
\newcommand{\uFpp}{\uOF 6}
\newcommand{\ui}{\uOF i}
\newcommand{\uJ}{\uOF j}
\newcommand{\uD}{\uOF d}
\newcommand{\up}{\uOF p}
\newcommand{\upp}{\uOF {p+1}}
\newcommand{\upm}{\uOF {p-1}}
\newcommand{\uq}{\uOF q}
\newcommand{\uqp}{\uOF {q+1}}
\newcommand{\uqm}{\uOF {q-1}}
\newcommand{\uip}{\uOF {i+1}}
\newcommand{\uJp}{\uOF {j+1}}
\newcommand{\uDm}{\uOF {d-1}}
\newcommand{\uAto}{\uA \to \uB}
\newcommand{\uBto}{\uB \to \uC}
\newcommand{\uito}{\ui \to \uip}
\newcommand{\etal}{et al.\ }
\newcommand{\qint}[2]{\left[ #1, #2 \right]_{\scriptscriptstyle Q}}
\newcommand{\kint}[2]{\left[ #1, #2 \right]_{\scriptscriptstyle K}}
\newcommand{\setleft}[1]{{\min {#1}}}
\newcommand{\setright}[1]{{\max {#1}}}
\newcommand{\setkint}[1]{\kint {\setleft {#1}} {\setright {#1}}}
\newcommand{\mycase}[1]{\smallskip\par\noindent\normalfont{\underline{\emph{#1}}.}}
\newcommand{\treeFontSize}{\small}
\newcommand{\interiornodeFill}{black!4}
\newcommand{\leafnodeFill}{black!4}
\newcommand{\subtreeFill}{black!4}
\newcommand{\textColor}{black}
\newcommand{\drawColor}{black!50}
\newcommand{\edgeColor}{black!60}
\newcommand{\edgeThickness}{}
\newcommand{\drawThickness}{thin}
\newcommand{\dimFill}{black!2}
\newcommand{\dimText}{black!65}
\newcommand{\dimDraw}{black!20}
\newcommand{\dimEdge}{black!20}
\newcommand{\dimEdgeThickness}{thin}
\newcommand{\halfdimFill}{black!2}
\newcommand{\halfdimText}{\textColor}
\newcommand{\halfdimDraw}{black!25}
\newcommand{\halfdimEdge}{black!30}
\newcommand{\halfdimEdgeThickness}{}
\newcommand{\brightFill}{red!4}
\newcommand{\brightText}{\textColor}
\newcommand{\brightDraw}{\drawColor}
\newcommand{\brightEdge}{\edgeColor}
\newcommand{\otherbrightFill}{green!4}
\newcommand{\otherbrightText}{\textColor}
\newcommand{\otherbrightDraw}{\drawColor}
\newcommand{\otherbrightEdge}{\edgeColor}
\newcommand{\afterFill}{blue!4}
\newcommand{\afterText}{\textColor}
\newcommand{\afterDraw}{\drawColor}
\newcommand{\afterEdge}{\edgeColor}
\renewcommand{\interiornodeFill}{black!0}
\renewcommand{\leafnodeFill}{black!0}
\renewcommand{\subtreeFill}{black!0}
\renewcommand{\dimFill}{black!0}
\renewcommand{\halfdimFill}{black!0}
\tikzset{label distance=-1pt}
\tikzset{weight/.style = {label={270:{\scriptsize #1}}}} 
\tikzset{decoration={snake,amplitude=.5mm,segment length=1.5mm,post length=0.4mm,pre length=0.75mm}}
\tikzstyle{edgeCommon} = [circle, font=\scriptsize, inner sep=0pt, outer sep=1.5pt, pos=0.55]
\tikzstyle{edgeYes} = [edgeCommon, text depth=0ex, text height=1.2ex, auto=right, node contents={y}]
\tikzstyle{edgeNo} = [edgeCommon, text depth=0ex, text height=1.2ex, auto=left, node contents={n}]
\tikzstyle{edgeYesRight} = [edgeCommon, text depth=0ex, text height=1.2ex, auto=left, node contents={y}]
\tikzstyle{edgeNoLeft} = [edgeCommon, text depth=0ex, text height=1.2ex, auto=right, node contents={n}]
\tikzstyle{edgeLess} = [edgeCommon, auto=right, node contents={$<$}]
\tikzstyle{edgeGreater} = [edgeCommon, auto=left, node contents={$>$}, pos=0.5]
\tikzstyle{edgeEqual} = [edgeCommon, auto=right, node contents={$=$}, pos=0.4]
\newenvironment{generictrees}{
  \treeFontSize
  \forest
  where level={0}{}{
    where level={1}{}{
      every leaf node={leafnode},
      every interior node={interiornode},
    }
  },
    }{\endforest}
\newcommand{%
  \input{FIGURES/TREES/}
}[1]{%
  \input{FIGURES/TREES/#1}
}
\newcommand{\inputTree}[1]{%
  %
  \input{FIGURES/TREES/#1}
}
\newcommand{\A}{(a)\xspace}     
\newcommand{\B}{(b)\xspace}
\newcommand{\C}{(c)\xspace}
\begin{document}

\title{Classification via Two-Way Comparisons}
\titlenote{An extended abstract of this paper appears in WADS 2023~\cite{DBLP:conf/wads/ChrobakY23}.
The journal version appears in~\cite{ChrobakY24}.}

\author{Marek Chrobak}
\authornote{Research partially supported by National Science Foundation grant CCF-2153723.}
\orcid{0000-0002-8673-2709}
\affiliation{%
    \institution{University of California, Riverside}
    \city{Riverside}
    \state{California}
    \country{USA}
}

\author{Neal E. Young}
\orcid{0000-0001-8144-3345}
\affiliation{%
    \institution{University of California, Riverside}
    \city{Riverside}
    \state{California}
    \country{USA}
}

\authorsaddresses{}


\begin{abstract}
  Given a weighted, ordered query set $Q$ and a partition of $Q$ into classes,
  we study the problem of computing a minimum-cost decision tree
  that, given any query $q\in Q$, uses equality tests and less-than tests
  to determine $q$'s class.
  Such a tree can be
  faster and smaller than a conventional search tree
  and
  smaller than a lookup table
  (both of which must identify $q$, not just its class).
  We give the first polynomial-time algorithm for the problem.
  The algorithm extends naturally to the setting where each query has multiple allowed classes.
\end{abstract}


\begin{CCSXML}
    <ccs2012>
    <concept>
    <concept_id>10003752.10003809.10010031.10010033</concept_id>
    <concept_desc>Theory of computation~Sorting and searching</concept_desc>
    <concept_significance>500</concept_significance>
    </concept>
    </ccs2012>
\end{CCSXML}

\ccsdesc[500]{Theory of computation~Sorting and searching}

\keywords{Data structures, algorithms, optimal search trees, classification}


\maketitle


\section{Introduction}\label{sec: introduction}

Given a weighted, ordered \emph{query} set $Q$ partitioned into classes,
we study the problem of computing a decision tree
that, given any query $q\in Q$,
uses equality tests (e.g., ``$q=4{?}$'') and less-than tests (e.g., ``$q<7{?}$'')
to quickly determine $q$'s class.
We call such a tree a \emph{two-way-comparison decision tree (\twoWCDT)}.
Figure~\ref{fig: lead example} shows an example.
In the special case where each class is a singleton (so identifies the query),
we call such a tree a \emph{two-way-comparison search tree (\twoWCST)}.
The goal is to find a \twoWCDT of minimum \emph{cost},
defined as the weighted sum of the depths of all queries,
where the depth of a given query $q\in Q$ is the number of tests the tree makes when
processing query $q$.

\begin{figure}[t]
  \centering
  \inputTree{lead-example}
  \caption{
    An optimal two-way-comparison decision tree (\twoWCDT) for the problem instance shown
    on the right. The instance (but not the tree)
    is from~\cite[Figure~6]{chamber_chen_dispatching_1999,Chambers:1999:EMP:320385.320407}.
    Each internal node represents a comparison between the given query and the node's key $k$: either an
    equality test, represented as ``$=\!k$'', or a less-than test, represented as ``$<\!k$''.
    Each leaf (rectangle) is labeled with the queries that reach it, and below that with the class for the leaf.
    The table gives the class and weight of each query $q\in Q= [50] = \{1, 2,\ldots, 50\}$.
    The tree has cost 2055, about 11\% cheaper
    than the tree from~\cite{chamber_chen_dispatching_1999,Chambers:1999:EMP:320385.320407},
    of cost 2305.
  }\label{fig: lead example}
\end{figure}

Whereas search trees and lookup tables must identify the query $q$
(or the inter-key interval that $q$ lies in),
a decision tree needs only to identify $q$'s class,
so can be faster and smaller than a conventional search tree,
and smaller than a lookup table.
Consequently,
decision trees are used in applications such as dispatch trees,
which allow compilers and interpreters to quickly resolve
method implementations for objects declared with type inheritance~\cite
{chamber_chen_dispatching_1999,Chambers:1999:EMP:320385.320407}.
(Each type is assigned a numeric ID via a depth-first search of the inheritance digraph.
For each method, its tree maps each type ID to its method resolution.)
Chambers and Chen~\cite
{chamber_chen_dispatching_1999,Chambers:1999:EMP:320385.320407}
give a heuristic to construct low-cost \twoWCDTs,
but leave open whether minimum-cost \twoWCDTs can be found in polynomial time.

We give the first polynomial-time algorithm to find minimum-cost \twoWCDTs.
It runs in time $O(n^4)$,
where $n=|Q|$ is the number of distinct query values.
This matches the best run-time known for the special case of \twoWCSTs.
The algorithm extends naturally to the setting where each query can belong to multiple classes,
any one of which is acceptable as an answer for the query.
The extended algorithm runs in time $O(n^3 m)$, where
$m$ is the sum of the sizes of the classes.


\paragraph{Related work.}
Decision trees of various kinds are ubiquitous in the areas of artificial intelligence, machine learning,
and data mining, where they are used for data classification, clustering, and regression
(see e.g.~\cite{bertsimas_optimal_2017}).
Here we study decision trees for one-dimensional data.
Most work on such trees has focussed on search trees.
Here is a summary of relevant work on optimal search trees.

The tractability of finding an optimal search tree depends heavily
on the kind of tests that the tree may use.
The most general case, allowing tests of membership in sets from any given family of subsets of $Q$,
is NP-hard, even if all subsets have size at most three~\cite{hyafil_constructing_1976},
or the family is required to be laminar~\cite{jacobs_complexity_2010}.
Early works considered trees in which each test compared the given query value $q$
to some particular comparison key $k$, with \emph{three} possible outcomes:
the query value $q$ is less than, equal to, or greater than $k$~\cite
[\S 14.5]{cormen_introduction_2022}~\cite[\S 6.2.2]{Knuth1998}.
We call such a tree a \emph{three-way-comparison} search tree, or \threeWCST for short.
(See Figure~\ref{fig: three-way vs two-way}\,(a).)
In a \threeWCST, the query values that reach any given node form an interval.
The possible intervals naturally represent $O(n^2)$ dynamic-programming subproblems,
leading to an $O(n^3)$-time algorithm for finding minimum-cost \threeWCSTs~\cite{gilbert_variablelength_1959}.
Knuth reduced the running time to $O(n^2)$~\cite{Knuth1971}.


\begin{figure}[t]
  \centering
  \inputTree{three-way-vs-two-way}
  \caption{
    Tree \emph{(a)} is a three-way-comparison search tree (\threeWCST).
    Tree \emph{(b)} is a two-way-comparison search tree (\twoWCST) for the same instance.
    The query (or interval of queries) reaching each (rectangular) leaf is within the leaf.
    The weight of the query (or interval) is below the leaf.
  }\label{fig: three-way vs two-way}
\end{figure}


In practice each three-way comparison is sometimes implemented by doing two two-way tests:
a less-than test followed by an equality test.
Knuth~\cite[\S 6.2.2, Example 33]{Knuth1998} proposed
exploring binary search trees that use these two types of tests directly in any combination,
that is, \twoWCSTs as defined earlier.
For the so-called \emph{successful-queries} variant (defined later),
assuming the query weights are normalized to sum to 1,
there is always a \twoWCST whose cost exceeds the entropy of the weight distribution by at most 1~\cite{dagan_etal_twenty_questions_2017}.
As the entropy is a lower bound on the cost of any binary search tree using arbitrary Boolean tests,
this suggests that restricting to less-than and equality tests need not be too costly.

Stand-alone equality tests introduce an algorithmic obstacle not encountered with \threeWCSTs.
Namely, while (analogously to \threeWCSTs) each node of a \twoWCST is naturally associated with an
interval of queries, not all queries from this interval necessarily reach the node,
so the dynamic program for \threeWCSTs does not extend easily to \twoWCSTs.
This led early works to focus on restricted classes of \twoWCSTs, namely
\emph{median split trees}~\cite{Sheil1978} and \emph{binary split trees}~\cite{Huang1984,Perl1984,Hester1986}.
These, by definition, constrain the use of equality tests
so as to sidestep the obstacle they introduce.
\emph{Generalized binary split trees} are less restrictive,
but the only algorithm proposed to find them~\cite{StephenHuang1984} is incorrect~\cite{chrobak_huang_2022}.
Likewise, the recurrence relations underlying the first algorithms proposed to find minimum-cost \twoWCSTs
(which were given without proof~\cite{Spuler1994Paper,Spuler1994Thesis})
are demonstrably wrong~\cite{chrobak_huang_2022}.

Spuler conjectured in 1994 that every \twoWCST instance has an optimal tree with the \emph{heaviest-first} property:
namely, in each equality-test node,
\emph{the comparison key is the heaviest among keys that reach the node}~\cite{Spuler1994Thesis}.
In 2002 Anderson \etal proved the conjecture for successful-queries \twoWCSTs,
leading to the first polynomial-time algorithm for that variant~\cite{Anderson2002}.
The algorithm runs in $O(n^4)$ time.
In 2021, Chrobak \etal simplified their result (in particular, the handling of equal-weight keys, as discussed later)
to obtain an $O(n^4)$-time algorithm to find optimal \twoWCSTs (both variants)~\cite{chrobak_simple_2021}.
%
%
These \twoWCST algorithms do not extend easily to \twoWCDTs,
because some \twoWCDT instances have no optimal tree with the heaviest-first property.
Figure~\ref{fig: inversion} gives an example.

\begin{figure}[t]
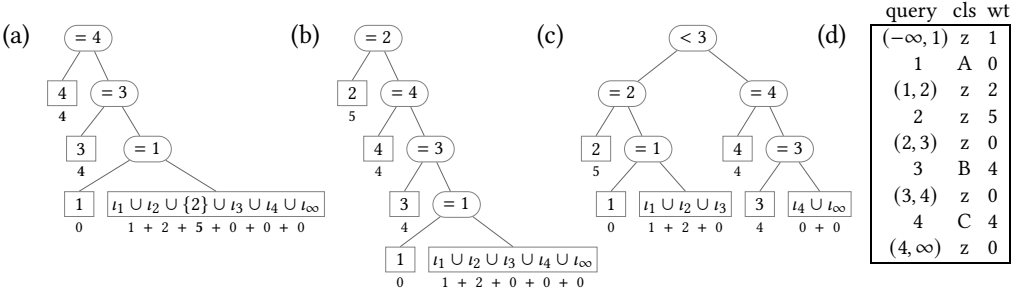

  \centering
  \inputTree{inversion}
  \caption{\label{fig: inversion}%
  Three trees for the \twoWCDT instance shown in \emph{(d)}.
  The set of queries reaching each (rectangular) leaf is shown within the leaf
    (to save space, there $\iota_i$ denotes the inter-key open interval
    with right boundary $i$, e.g. $\iota_1 = (-\infty, 1)$, $\iota_2 = (1, 2)$).
    The associated weights are below the leaf.
    The optimal tree \emph{(a)} has cost 36 and is not heaviest-first.
    Each heaviest-first tree (e.g.\ \emph{(b)} of cost 37
    or \emph{(c)} of cost 39) is not optimal.
    These properties also hold if each weight is perturbed 
    to make the weights distinct.
    (Note: in our formal model, the inter-key intervals will be represented by virtual non-key queries.)
  }
\end{figure}


\paragraph{Our contributions.}
The \emph{rotation} operation is a standard tool for studying \twoWCSTs with only less-than tests (and \threeWCSTs).
Following~\cite{Anderson2002} and~\cite{chrobak_simple_2021}
we use a generalized rotation that applies to \twoWCSTs with both types of tests.
We generalize it further, to decision trees $T$ such that the test at each internal node $u$
  is a test of membership in some set $X_u\subseteq Q$,
  subject only to the constraint that the collection of such test sets
  $\{X_u : u \in T\}$ is laminar.
  For each such node $u$, the edge to one child is associated with $X_u$
  while the edge to the other child is associated with the complement $\overline X_u = Q\setminus X_u$.
  Given any query $q$, the search for $q$ in $T$ follows the unique root-to-leaf path whose edges' sets all contain $q$.
We call such trees \emph{laminar decision trees}, or {\LDT}s for short.
(See Section~\ref{sec: definitions}.)

Suppose that, in such a laminar decision tree $T$,
there is an ``imbalance'' in the tree:
for some downward path $\uAto\to\cdots\to\uD$,
the sibling $\uB'$ of $\uB$
is lighter than $\uD$.
(That is, $\weight{\uB'} < \weight{\uD}$,
where $\weight u$, the \emph{weight} of node $u$,
is as usual the total weight of the queries that reach the node.)
Then Theorem~\ref{thm: imbalance} (Section~\ref{sec: imbalance}) states that,
if $T$ is optimal,
\emph{the sets associated with the edges leaving the path $\uA\to\cdots\to\uD$ must be pairwise disjoint.}
(The edges leaving the path are $\{\uito': 1\le i < d\}$,
  where, for any node $u$ other than the root, $u'$ denotes the sibling of $u$ in $T$.)
This theorem generalizes the key structural theorems of~\cite{Anderson2002} and~\cite{chrobak_simple_2021};
in particular, it implies the heaviest-first property for \twoWCSTs.

Section~\ref{sec: intermediate} then proves Theorem~\ref{thm: intermediate},
  which strengthens Theorem~\ref{thm: imbalance} specifically for trees with less-than and equality tests,
that is, \twoWCSTs.
  Section~\ref{sec: admissible}
  uses Theorem~\ref{thm: intermediate}
  to prove Theorem~\ref{thm: admissible},
  that there is always an optimal tree that is \emph{admissible}.
  This means roughly that, at each equality-test node $\test = h$ in the tree,
  if the key $h$ is not the heaviest key reaching the node,
  it must be one of at most three other suitably restricted keys
  (Definition~\ref{def: admissible}).
A careful implementation
then yields the main result
  (Theorem~\ref{thm: algorithm} in Section~\ref{sec: algorithm}):
an $O(n^3 m)$-time dynamic-programming algorithm to find a minimum-cost \twoWCDT.


\paragraph{The role of distinct key weights.}
The discussion above glosses over a secondary technical obstacle for \twoWCSTs.
For \twoWCST instances whose key weights are \emph{distinct},
the heaviest-first property determines the key of each equality test uniquely,
so that the queries that reach any given node in a \twoWCST (with the property)
must form one of $O(n^4)$ predetermined subsets,
leading naturally to a dynamic program with $O(n^4)$ subproblems. 
But this uniqueness is lost when key weights are not distinct.
This obstacle turns out to be more challenging than one might expect.
Indeed, there are instances with non-distinct weights for which, for every non-empty subset $S$ of $Q$,
there is a \twoWCST that has the heaviest-first property,
and a node $u$ such that the set of queries reaching $u$ is $S$.
One cannot just break ties naively:
it can be that, for two maximum-weight keys $h$ and $h'$ reaching a given node $u$,
there is an optimal subtree in which $u$ does an equality-test to $h$,
but none in which $u$ does an equality-test to $h'$~\cite[Figure 3]{chrobak_simple_2021}.
Similar issues arise in finding optimal \emph{binary split trees}---these
can be found in time $O(n^4)$ if the instance has distinct weights,
while for arbitrary instances the best bound known is $O(n^5)$~\cite{Hester1986}.

Nonetheless, using a perturbation argument Chrobak~\etal~\cite{chrobak_simple_2021} show that
an arbitrary \twoWCST instance can indeed be handled as if it is a distinct-weights instance
just by breaking ties among equal weights in
a globally consistent way.
We use the same approach here for \twoWCDTs.


\subsection{Definitions}\label{sec: definitions}


\medskip

{ \addtolength{\parskip}{0.2em}
  An instance $I$ of the \emph{laminar decision tree} problem ({\LDT}\!) is specified by a tuple $I=(Q, w, \calC, \calF)$,
  where $Q$ is a finite, totally ordered, non-empty set of \emph{queries}, 
 with each query $q\in Q$ assigned a weight $w(q)\ge 0$,
  the set $\calC\subseteq 2^Q$ is a collection of query \emph{classes}
  (with each class having a unique identifier),
  and $\calF \subseteq 2^Q \setminus \{\emptyset, Q\}$ is laminar.
  Call each set $X\in\calF$ a \emph{test},
  with two \emph{outcomes}: $X$ (the \emph{yes} outcome),
  and $\overline X = Q\setminus X$ (the \emph{no} outcome).
  Let $n$ and $m$ denote, respectively, $|Q|$ and $\sum_{c\in \calC} |c|$.
  A \emph{decision tree} for $I$
  is a rooted binary tree $T$ where each non-leaf node $u$
  has an associated test $X_u\in \calF$,
  with the edge to one child of $u$ associated with the yes-outcome $X_u$,
  and the edge to the other child of $u$ associated with the no-outcome $\overline X_u$.
  Each leaf node $u$ is labeled with (the identifier of) some class $c_u\in\calC$,
  which must contain the intersection of the outcomes of the edges along the path from the root to $u$
  (this intersection is comprised of those queries $q\in Q$ whose search, as defined next, ends at $u$).

  For each $q\in Q$, the \emph{search for $q$ in $T$} follows
  the (unique) root-to-leaf path of edges whose outcomes all contain $q$.
  Call this path $q$'s \emph{search path}.
  Say that $q$ \emph{reaches} each node on this path.
  Call the leaf that $q$ reaches \emph{$q$'s leaf}.
  Define $q$'s \emph{depth} (in $T$) to be the depth of $q$'s leaf
  (equivalently, the number of tests on $q$'s search path).
  The \emph{cost} of $T$ is the weighted sum of the depths of all queries in $Q$
  (where each query $q\in Q$ has weight $w(q)$).
  A solution for $I$ is a decision tree for $I$ of minimum cost.

  A decision tree $T$ is called \emph{irreducible} if, for each node $u$ in $T$, (i) at least one query in $Q$ reaches $u$,
  and (ii) if any class $c\in \calC$ contains all the queries that reach $u$, then $u$ is a leaf.
  Any decision tree can easily be converted into an irreducible tree without increasing its cost,
  so we generally restrict attention to irreducible trees.
  As we shall see, in an irreducible tree $T$,
  each non-leaf node $u$ has a distinct test $X_u$
and
  each edge $u\to v$ has a distinct outcome,
  so, when convenient, we identify each node $u$ with its test $X_u$
  and identify each edge $u\to v$ with its outcome ($X_u$ or $\overline X_u$).

 Note that an {\LDT} instance is not necessarily \emph{feasible}, that is, it might not have a decision tree.
To be feasible, in addition to each query belonging to some class,
it must have the property that each set of queries that cannot be separated
by tests in $\calF$ (that is, for each test $X\in\calF$ either this set is a subset of $X$
or is disjoint with $X$) must be contained in some class.

An \emph{equality test with key $k$} is the test (set) $\{k\}$.
A \emph{less-than test with key $k$} is the test (set) $\{q\in Q: q < k\}$.
The \twoWCDT problem is the restriction of \LDT
to instances in which, for some set $K\subseteq Q$ of \emph{keys},
$\calF$ is comprised of the equality and less-than tests whose keys are in $K$.
(It is straightforward to verify that this is a laminar family.)
In this context we denote the instance as $I=(Q, w, C, K)$.

\medskip

The assumption $K\subseteq Q$ is for ease of presentation.
Also, we can assume without loss of generality that each query belongs to some class,
so $m \ge n = |Q|$ and the input size\footnote
{Note that $\calF$, being laminar, can be encoded as a tree in space $O(n)$.}
is $\Theta(n+m) = \Theta(m)$.
As discussed at the end of Section~\ref{sec: algorithm},
although our definition of \twoWCDTs allows only less-than and equality tests,
all results extend easily to the other standard inequality tests.

\paragraph{Successful-queries variants.}
Conventionally,
in the \emph{successful-queries} variants of binary search-tree problems, the input is an ordered set $K$ of weighted keys.
Each comparison must compare the given query value to a particular key in $K$
and each query must be a value in $K$.
Such queries are called \emph{successful}.
In the \emph{standard} variants, the input is augmented with a weight for each open interval between consecutive keys
(and before the minimum key and after the maximum key).
\emph{Unsuccessful} queries, that is, queries to values within these intervals, are also allowed.
They must be answered by returning the interval in which the query falls.
Our definition of \twoWCDTs captures both variants:
restricting to $Q=K$ gives the successful-queries variant,
while the standard variant can be modeled by adding one non-key query within each open interval to $Q$.



\section{Imbalance theorem for trees with laminar tests}\label{sec: imbalance}

This section states and proves Theorem~\ref{thm: imbalance} (the imbalance theorem):


\begin{restatable}{theorem}{thmImbalance}
  \label{thm: imbalance}%
  Let $T$ be any optimal, irreducible tree for an \LDT instance $I=(Q, w, \calC, \calF)$.
  Let $\uA\to \uB\to \cdots \to \uD$ be the downward path
  from any node $\uA$ to any proper descendant $\uD$ in $T$
  such that $\weight{\uB'} < \weight{\uD}$.
  Then the outcomes leaving $\uA\to\cdots\to\uD$
  are pairwise disjoint.
\end{restatable}

The outcomes leaving $\uA\to\cdots\to\uD$ are $\{\uito' : 1 \le i < d\}$.
Note that this does not include any outcome out of $\uD$.
Recall that in $T$ each node $u$ is identified with its test set $X_u$,
each edge $u\to v$ is identified with its outcome $X_u$ or $\overline X_u$,
and $u'$ denotes the sibling of $u$, unless $u$ is the root.

\paragraph{Intuition.}
The theorem considers how, in an optimal \twoWCDT,
it can happen that a node ($\uD$) can be heavier than the sibling ($\uB'$) of some ancestor ($\uB$).
If this happens, then it must be that we can't rotate the node up the tree above its ancestor.
The theorem says that this can happen only if the outcomes leaving the path from the ancestor to the node are disjoint.

Here are some examples to build intuition.
Let $T$ be as assumed in the theorem.
First suppose that each test along the path $\uA\to \cdots \to\uD$ with $\weight{\uB'} < \weight {\uD}$
is a less-than test.
Then each outcome leaving the path contains either $\min Q$ or $\max Q$,
so, by the theorem, at most two edges leave the path
(at most one containing $\min Q$, at most one containing $\max Q$).
That is, $d \le 3$.

Another consequence: for any downward path $x\to y\to z$ in $T$,
the weight $\weight {y'}$ of the sibling of $y$ is
at least $\min \big(\weight{z}, \weight{z'}\big)$.
(Otherwise,
applying the theorem to the path $x\to y \to z$,
and then to the path $x \to y \to z'$,
the outcome $x \to y'$ is disjoint from outcomes $y \to z'$ and $y \to z$,
so the outcome $x \to y'$ would be empty,
contradicting the definition of \LDTs.)

Finally, for any equality test $\test = k$ in $T$,
for any proper ancestor $a$ of $\test = k$,
the weight $\weight {a'}$ of the sibling of $a$ (if there is one) is at least $\weight k$.
(Otherwise, let $p$ be the parent of $a$.
Let $L_k$ be the yes-child of $\test = k$.
Then the theorem applies to the path $p\to a \to \cdots \to \test = k \to L_k$,
so the outcome of $p \to a'$
is disjoint from the outcome of $\test = k \to L_k'$,
so must be a subset of the outcome of $\test = k \to L_k$, i.e., the singleton $\{k\}$.
So the outcome $p \to a'$ is either empty, contradicting the definition of \twoWCDTs,
or also $\{k\}$, contradicting the irreducibility of $T$.)
As a special case every equality-test ancestor $\test = h$ of $\test = k$
satisfies $\weight h \ge \weight k$.

In fact,
Theorem~\ref{thm: imbalance} generalizes the key structural theorems of~\cite{Anderson2002} and~\cite{chrobak_simple_2021} for \twoWCSTs.
For instance, the heaviest-first property of \twoWCSTs follows easily from the above paragraph.
Indeed, fix an optimal, irreducible \twoWCST tree $T$.
Assume without loss of generality that,
if the parent of any leaf $L_k$ in $T$ is a test node $\test = {h}$,
then $\weight{h} \ge \weight k$.
(Otherwise just change the parent to $\test = k$, making $L_k$ the yes-child.)
To show that $T$ has the heaviest-first property,
consider any test node $\test{=}{h}$ whose no-subtree has a leaf $L_k$ for a key $k$.
We will show $\weight k \le \weight h$.
In the case that $L_k$ is a child of $\test{=}{h}$, then the previous assumption implies $\weight k \le \weight h$.
So assume that $L_k$ is not a child of $\test{=}{h}$.
If the parent of $L_k$ is not already $\test{=}{k}$,
consider replacing that parent by $\test{=}{k}$, making $L_k$ the yes-child.
This preserves optimality and correctness.
Now $\weight h \ge \weight k$ follows from the last sentence in the previous paragraph, applied to the (possibly modified) tree.


\paragraph{The generalized rotation.}
Next we lay the groundwork for the proof of Theorem~\ref{thm: imbalance}.
Fix an \LDT instance $I=(Q, w, \calC, \calF)$.
Say tests $X,Y\in \calF$ are \emph{equivalent} if $X = Y$ or $X = \overline Y$.
We'll use only the following property of $\calF$,
which is essentially\footnote
{Property~\ref{prop: laminarity X Y} is a-priori weaker than laminarity,
  but any family $\calF$ with Property~\ref{prop: laminarity X Y}
can be converted into an equivalent laminar family $\calF'$
  by fixing any element $q_0\in Q$
  and taking $\calF' = \{X\in \calF : q_0 \not\in X\} \cup \{\overline X : X\in \calF, q_0\in X\}$.}
a restatement of laminarity:

\begin{property}
  \label{prop: laminarity X Y}
  Given two non-equivalent tests $X, Y\in \calF$,
  among the four pairs of outcomes
  in $\{X, \overline X\} \times \{Y, \overline Y\}$,
  exactly one pair are disjoint.
\end{property}

Fix an irreducible tree $T$ for $I$.

\begin{property}
  \label{prop: laminarity}
  Let $u$ and $v$ be distinct non-leaf nodes in an irreducible decision tree $T$ for $I$.
  Then
  (i) the tests at $u$ and $v$ are not equivalent.
  If $u$ is a proper ancestor of $v$ then
  (ii)
  the outcome from $u$ on the path from $u$ to $v$
  overlaps with both outcomes from $v$,
  while
  (iii)
  the other outcome from $u$
  (the one leaving the path from $u$ to $v$)
  is a subset of one outcome from $v$,
  and disjoint from the other outcome from $v$.
\end{property}

\begin{proof}
  Part~(ii) follows directly from the irreducibility of $T$.
  In the case when $u$ is an ancestor of $v$,
  Part~(ii) implies both Part~(i) and, using Property~\ref{prop: laminarity X Y}, Part~(iii),
  so we are done in this case.
  Since Part~(i) (non-equivalence) holds when $u$ is an ancestor of $v$,
  it also holds when $v$ is an ancestor of $u$, just by reversing their roles.
  To finish we show Part~(i) when neither is an ancestor of the other.

  Suppose for contradiction that $u$ and $v$ are equivalent.
  Let $a$ be the lowest common ancestor of $u$ and $v$.
  Let $a\to b$ and $a\to b'$ be the outcomes from $a$ leading towards $u$ and $v$, respectively.
  Part (ii) holds for $a$ and $u$,
  so $a\to b$ overlaps both outcomes from $u$.
  By the same reasoning (reversing the roles of $u$ and $v$)
  outcome $a\to b'$ overlaps both outcomes from $v$,
  implying (by the equivalence of $v$ and $u$)
  that $a\to b'$ overlaps both outcomes from $u$.
  So both outcomes at $a$ overlap both outcomes at $u$,
  contradicting Property~\ref{prop: laminarity X Y}.
\end{proof}


Here is some hopefully mnemonic terminology:

\begin{definition}
  \label{def: preferred}
  Given an outcome $b\to c'$ in $T$ from a non-root node $b$ to child $c'$,
  let $a$ be the parent of $b$.
  Call the sibling $b'$ of $b$ the \emph{uncle} of the child $c'$.
  If $b\to c'$ is the outcome at $b$ that is disjoint from the outcome $a\to b'$ from the grandparent to the uncle,
  say that the child $c'$ and the outcome $b\to c'$ are \emph{preferred} by $b$.
\end{definition}

By Property~\ref{prop: laminarity}(iii),
$b$ has exactly one preferred child and one preferred outcome, which leads to that child.
Also, the outcome $a\to b'$ from the grandparent to the uncle
is a subset of the non-preferred outcome $b\to c$ at $b$.


\begin{definition}
  \label{def: rotation}
  Given a non-root test node $b$, let $a$ be the parent of $b$.
  \emph{Rotating $b$ (above $a$)}
  replaces the subtree $T_a$ rooted at $a$ in $T$
  with the subtree $T'_b$ obtained from $T_a$ as shown in Figure~\ref{fig: rotation},
  that is, it exchanges the nodes $a$ and $b$ along with
  the subtrees rooted at their respective children $b'$ and $c'$.
\end{definition}

\begin{figure}
  \centering
  \input{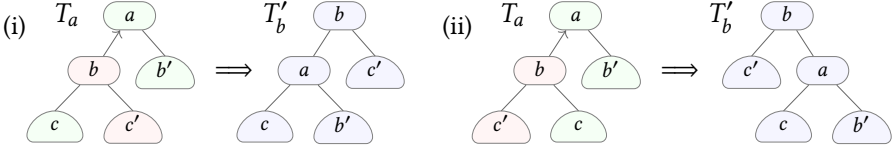}

  \caption{
    Rotating a non-root test node $b$ in $T$ moves $b$
    (along with its preferred child $c'$ and the subtree rooted at $c'$)
    above its parent $a$.
    Unlike binary search trees, laminar search trees are not inherently ordered.
    When drawing a rotation in a laminar tree, we draw the first tree $T_a$ using any convenient order,
    then, when drawing the rotated tree $T'_b$, order each node's two outcomes the same as they were ordered in $T_a$.
    Above, (i) and (ii) are two ways of drawing the exact same rotation.
    Throughout, $u'$ denotes the sibling of a given node $u$ in the original tree $T$,
    in which $T_a$ is a subtree.
  }\label{fig: rotation}
\end{figure}

%
Next we show that the rotation operation is correct.
To avoid confusion, note that, when considering a sequence of trees derived from $T$,
the notation $u'$ always denotes the sibling of node $u$ \emph{in $T$},
which is not necessarily the sibling of $u$ in subsequent trees.
Likewise, the notation $u\to v$ always denotes the outcome leading from $u$ to $v$ \emph{in $T$}.
The notation $u \stackrel{T'} \to v$ denotes the outcome leading from $u$ to $v$ in some subsequent tree $T'$.


\begin{observation}
  \label{obs: rotation}
  Let $T'$ be obtained from $T$ by rotating $b$ up as described above.
  Then (i) $T'$ is an irreducible decision tree for $I$,
  and (ii) the cost of $T'$ is the cost of $T$ plus $\weight{b'} - \weight {c'}$,
  so, provided $T$ is optimal, $\weight{b'} \ge \weight{c'}$.
  That is, the preferred child $c'$ cannot be heavier than its uncle $b'$.
\end{observation}

\begin{proof}
  \emph{Part (i).} Recall that the queries reaching a node
  are those in the intersection of all outcomes along the path from the root to the node.
  We will show that, for each leaf $L$, this set is the same in $T$ as it is in $T'$.

  If $L$ is not a descendant of $a$, the path from the root to $L$ does not change.
  If $L$ is a descendant of $c$, this path changes but the set of outcomes on this path is the same in $T$ and $T'$.
  It remains to consider the cases when $L$ is a descendant of $b'$ or $c'$ in $T$.

  In the case that $L$ is a descendant of $b'$,
  the only change to the path to $L$ is the addition of the outcome $b \to c$.
  (In $T'$ that outcome is now $b \stackrel {T'} \to a$.)
  But the path contains the outcome $a \to b'$,
  which (being disjoint from the preferred outcome $b\to c'$)
  is a subset of the non-preferred outcome $b \to c$.
  So the intersection is unchanged.

  Similarly, in the remaining case ($L$ is a descendant of $c'$)
  the path loses $a \to b$ (which is $a\stackrel {T'}\to c$ in $T'$).
  But the path contains the preferred outcome $b\to c'$
  which (being disjoint from $a\to b'$)
  is a subset of $a \to b$.
  So the intersection is unchanged.

  \smallskip
  \noindent
  \emph{Part (ii).}
  The rotation increases the depth of each descendant of the uncle $b'$ by one,
  while decreasing the depth of each descendant of the preferred child $c'$ by one,
  thus increasing the tree cost by $\weight{b'} - \weight {c'}$.
\end{proof}

Each non-root test node has a preferred child,
so by Observation~\ref{obs: rotation}
if $T$ is optimal each non-root test node has a child that weighs no more than the child's uncle:

\begin{observation}
  \label{obs: nephews}
  Suppose $T$ is optimal.
  For any non-root node $u$ with children $v$ and $v'$,
  $\weight {u'} \ge \min\big(\weight v, \weight {v'}\big)$.
\end{observation}


We now prove the theorem.

\begin{proof}[Proof of Theorem~\ref{thm: imbalance}]
  Let $T$, $I=(Q, w, \calC, \calF)$, and $\uA\to \uB\to \cdots \to \uD$ be as in the theorem statement,
  so $\weight{\uB'} < \weight{\uD}$.
  We claim that
  $\weight {\uB'} \ge \weight{\uC'} \ge \cdots \ge \weight {\uD'}.$
  Suppose otherwise for contradiction.
  Fix $j < d$ such that
  $\weight {\uB'} \ge \weight{\uC'} \ge \cdots \ge \weight {\uJ'} < \weight{u'_{j+1}}$.
  By Observation~\ref{obs: nephews} and $\weight {\uJ'} < \weight{u'_{j+1}}$,
  it must be that $\weight{\uJ'} \ge \weight{\uJp}$.
  Using this, the choice of $j$, and that $\uD$ is a descendant of $\uJp$, we have
  $\weight{\uB'} \ge \weight {\uJ'} \ge \weight {\uJp} \ge \weight{\uD}$,
  contradicting $\weight{\uB'} < \weight{\uD}$ and proving the claim.

  The claim, and $\weight{\uB'} < \weight{\uD}$, and the ancestry relations imply
  \begin{equation}
    \label{eq: chain}
    \weight {\uD'} \le \weight{\uDm'} \le \cdots \le \weight {\uB'}
    ~<~ \weight{\uD} \le \weight{\uDm} \le \cdots \le \weight{\uA}.
  \end{equation}

  Next suppose for contradiction that at least one pair of outcomes leaving the path overlaps.
  Fix such a pair $\up\to\upp'$ and $\uq\to\uqp'$ with $p < q < d$
  such that the later outcome $\uq\to\uqp'$ overlaps the earlier outcome $\up\to\upp'$,
  but \emph{is disjoint from} each outcome leaving the path between these two.
  (Formally, $\uq\to\uqp'$ overlaps $\up\to\upp'$
  but is disjoint from each $\ui\to\uip'$ with $p < i < q$.
  Such a pair must exist.
  For example, fix any $q < d$
  such that there is an earlier outcome leaving the path that overlaps $\uq\to\uqp'$.
  Then, among the latter, take $\up\to\upp'$ to be the one with maximum $p$.)

  \begin{figure}[t]
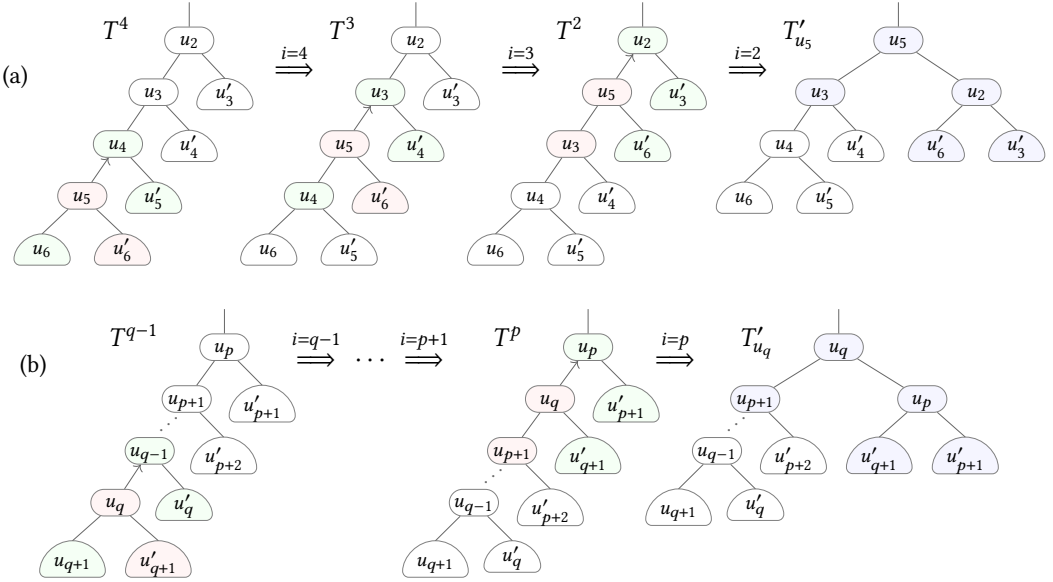

    \centering
  \input{FIGURES/TREES/imbalance_theorem}

    \vspace*{1em}

  \input{FIGURES/TREES/imbalance_theorem_general}

    \caption{
      The sequence of rotations in the proof of Theorem~\ref{thm: imbalance}.
      The drawing orders the initial tree $T^{q-1}=T$ so the path $\up\to\cdots\to\uq$ lies on the left spine.
      The case $(p, q) = (2, 5)$ is shown in (a).
      For the general case, (b) shows the first and last two trees in the sequence.
      In each rotation except the last, the preferred outcome of $\uq$ is $\uq\to\uqp'$.
      The preferred outcome is drawn to the right, so the rotation is of the form shown in Figure~\ref{fig: rotation}(i).
      It moves $\uq$ (and the preferred outcome $\uq\to\uqp'$) above $\ui$.
      Finally, in the last rotation, the preferred outcome of $\uq$ is $\uq\to\uqp$.
      The preferred outcome is drawn to the left,
      so the rotation is of the form shown in Figure~\ref{fig: rotation}(ii).
      This rotation moves the root $\up$ down and out of the path.
    }\label{fig: imbalance theorem}.
  \end{figure}

  Now, as illustrated in Figures~\ref{fig: imbalance theorem}(a) and~(b),
  rotate $\uq$ up the sub-path $\up\to\upp\to\cdots\to\uq$,
  ancestor by ancestor, just until $\uq$ becomes the parent of $\up$.
  That is, let $T^{q-1}$ be the initial tree $T$,
  then, for each $i \gets q-1, q-2, \ldots, p$ in decreasing order,
  let the next tree $T^{i}$ be obtained from the previous tree $T^{i+1}$
  by rotating $\uq$ above $\ui$.
  In each tree $T^i$ except the last, the parent of $\uq$ is $\ui$.
  The final tree $T^{p+1}$ is obtained from $T^p$ by rotating $\uq$ above $\up$.

  For each rotation except the last (each $i > p$),
  by the choice of $q$ and $p$,
  the outcome leaving $\uq$ that is disjoint from $\ui\to \uip'$ is $\uq\to\uqp'$
  (in both the original tree $T$ and the current tree $T^i$).
  So $\uq\to\uqp'$ is the preferred outcome for this rotation,
  and the rotation is as illustrated in Figures~\ref{fig: imbalance theorem}(a) and~(b).
  The preferred outcome is drawn to the right,
  so takes the form shown in Figure~\ref{fig: rotation}(i).
  It moves $\uq$ (and the preferred outcome $\uq\to\uqp'$) above $\ui$.
  Thus, just before the final rotation, the tree ($T^p$)
  is as shown in Figures~\ref{fig: imbalance theorem}(a) and (b),
  with $\uq$ (and the preferred outcome $\uq\to\uqp$) just below $\up$.
  (The tree $T^p$ could also be obtained directly from $T$
  by just deleting the three edges in $\uqm\to\uq\to\uqp$ and $\up\to\upp$
  and replacing them by the three edges $\uqm\to\uqp$ and $\up\to\uq\to\upm$.)
  The final rotation then rotates $\uq$ above $\up$.
  By the choice of $p$,
  the preferred outcome at $\uq$ for this rotation is $\uq\to\uqp$
  (in $T$; in the current tree $T^p$ this outcome is $\uq\stackrel {T^p}\to\upp$).
  So the rotation is as illustrated on the right of Figure~\ref{fig: imbalance theorem}(a) and (b),
  where the preferred outcome is drawn as the \emph{left} outcome of $\uq$,
  so is drawn in the form shown in Figure~\ref{fig: rotation}(ii).
  This rotation moves $\up$ down and out of the path.

  By inspection of the first and last trees in Figure~\ref{fig: imbalance theorem}(b),
  rotating $\uq$ (with $\uqp'$) all the way up the path and then rotating $\up$ out of the path in the final rotation
  changes the leaf depths as follows.
  The depths of descendants of $\uqp$ decrease by one, as they lose the ancestor $\up$.
  The depths of descendants of $\uqp'$ decrease by $q-p-1 \ge 0$, as they lose ancestors $\upp, \ldots, \uqm$.
  The depths of descendants of $\upp'$ increase by one, as they gain the ancestor $\uq$, which is rotated above them.
  The depths of other leaves in the subtree $T_{\uq}$ don't change, as they gain ancestor $\uq$ but lose $\up$.
  Hence, the increase in cost is at most $\weight{\upp'} - \weight{\uqp}$.
  From the optimality of $T$ it follows that $\weight{\upp'} \ge \weight{\uqp}$,
  contradicting~\eqref{eq: chain} and proving Theorem~\ref{thm: imbalance}.
\end{proof}


\section{Structural theorem for \twoWCDTsintitle}\label{sec: intermediate}

This section proves Theorem~\ref{thm: intermediate}, below,
which is an intermediate step towards proving the existence of an admissible tree.
The proof uses Theorem~\ref{thm: imbalance}, a ``bisection'' operation (a generalization of the rotation operation),
and specific properties of inequality and equality tests.
The example in Figure~\ref{fig: inversion} may be helpful in developing intuition for the theorem.

Let $T$ be an arbitrary irreducible tree for an arbitrary \twoWCDT instance $(Q, w, \calC, K)$.
Recall that, since we are working with classification rather than search,
the leaf $L_k$ for a key $k$ may have additional queries in its query set.


\begin{restatable}{theorem}{theoremIntermediate}
  \label{thm: intermediate}
  Suppose the instance has distinct weights and $T$ is optimal.
  Consider any equality-test node $\test = h$ in $T$ and a key $k$ with $\weight k > \weight h$ reaching this node.
  Then (i) a search for $h$ from the no-child of $\test = h$ would end at the leaf $L_k$ for $k$,
  and (ii) the path from $\test = h$ to $L_k$ has at most four nodes
  (including $\test = h$ and $L_k$).
  (iii) Also, $h$ is not in the class that $T$ assigns to $k$.
\end{restatable}

\begin{proof}
  Let $\uA\to\uB\to\cdots\to\uD$ be the path from $\test = h$ to $L_k$.
 As usual, the outcomes leaving the path are $\{\uito' : 1\le i<d\}$.
  So $\uA$ is $\test = h$, while $\uB'$ is the leaf for $h$, and $\uD$ is $L_k$.
  As $\weight{\uD} = \weight {L_k} \ge \weight k > \weight h = \weight{\uB'}$,
  the imbalance theorem (Theorem~\ref{thm: imbalance}) applies to the path.
  The theorem implies the following observation:


  \begin{observation}
    \label{obs: ast}
    The outcomes leaving the path are pairwise disjoint.
  \end{observation}

  The yes-outcome $\uAto'$ of $\test = h$ leaves the path, so by Observation~\ref{obs: ast}
  that outcome, that is, $\{h\}$,  is disjoint with all other outcomes leaving the path.
  Hence, a search for $h$ starting from the no-child $\uB$ of $\test= h$ would not leave the path,
  so would end at $L_k$.  This proves Part (i) of the theorem.

  \smallskip

  To prove Part (iii),
  suppose for contradiction that $h$ is in the class that $T$ assigns to $k$.
Then, in the case $d=2$, we could replace the node $\test{=}{h}$ by a leaf labeled with the class assigned by $T$ to $k$,
contradicting irreducibility. So assume $d\ge 3$.
  By Part (i) of the theorem, changing the test key at $\test = h$ to $k$ (and relabeling $\uB'$ with a class containing $k$)
  would give a correct tree, while decreasing the cost by $(\weight k - \weight h)(d-2)$.
  By assumption $\weight k > \weight h$,
  so $(\weight k - \weight h)(d-2) > 0$,
  and thus the modification would give a correct tree strictly cheaper than $T$,
  contradicting the optimality of $T$.

  \smallskip

  The rest of this section proves Part (ii), that is, that $d$ is at most $4$.
  Assume for contradiction that $d\ge 5$.
  We prove two independent lemmas.


  \begin{lemma}
    \label{lemma: 2 w(h) < w(uC)}~$2\,\weight h  < \weight{\uC}$.
  \end{lemma}

  \begin{proof}
    Consider inserting a new equality-test $\test = k$ above $\uC$,
    that is, replacing $T_{\uC}$ by a new equality test $\test =   k$
    whose yes-child is a new leaf labeled with any answer that $k$ accepts,
    and whose no-subtree is a copy of $T_{\uC}$.
    This increases the search depth of every query reaching $\uC$, except key $k$,  by 1.
    It decreases the search depth of $k$ by at least 1.
    Thus, the increase in cost is at most $(\weight{\uC} - \weight  k) - \weight  k$.
    With the optimality of $T$ this implies $\weight{\uC}  \ge 2\, \weight  k > 2\, \weight h $.
  \end{proof}

  \newcommand{\ambOp}{{\scriptstyle?}}

  Let $k_1 \le k_2 \le k_3 \le k_4$ be the comparison keys of the four tests in $\uA$, $\uB$, $\uC$, and $\uF$,
  sorted into non-decreasing order.
  Next we consider ``bisecting'' the subtree $T_{\uA}$ by introducing test node $\test < {k_3}$ as a new root
  and adjusting the rest of the tree appropriately.


  \begin{lemma}
    \label{lemma: bisectable}
    Among the four outcomes $\uito'$ (with $1\le i \le 4$) leaving the path,
    two are disjoint with the yes-outcome of $\test < {k_3}$,
    while the other two are disjoint with the no-outcome of  $\test < {k_3}$.
  \end{lemma}

  \begin{proof}
    We will show that $k_3$ has the desired property.

    Suppose at least two of the four tests in $\uA$, $\uB$, $\uC$, and $\uF$ are inequality tests,
    say $\test < {k_i}$ and $\test < {k_j}$ with $i < j$.
    Then (using $k_i \le k_j$) the yes-outcome of $\test < {k_i}$ and the no-outcome of $\test < {k_j}$ are disjoint.
    By Property~\ref{prop: laminarity} all other pairs of outcomes between the two nodes overlap.
    So, by Observation~\ref{obs: ast},
    \emph{if there are two less-than tests $\test < {k_i}$ and $\test < {k_j}$ in $\{\uA, \uB, \uC, \uF\}$ with $i<j$,
      then the outcomes leaving the path from  $\test < {k_i}$ and $\test < {k_j}$ are, respectively,
      the yes-outcome and the no-outcome.}

    By the preceding sentence, $\{\uA, \uB, \uC, \uF\}$ contains at most two less-than tests,
    and therefore at least two equality tests.
    The yes-outcome of any equality test $\ui$ is disjoint with some outcome of any $\uJ$,
    so by Property~\ref{prop: laminarity}
    the no-outcome of $\ui$ overlaps both outcomes of any $\uJ$ with $j\ne i$,
    and by Observation~\ref{obs: ast}
    \emph{the outcomes leaving the path from the (at least two) equality tests are yes-outcomes.}

    \smallskip

    Suppose for contradiction that the yes-outcome of some less-than test $\test < {k_j}$ with $k_j \ne k_1$ leaves the path.
    By the conclusion of the second-to-last paragraph above, and by $k_1 < k_j$,
    the test with key $k_1$ cannot be a less-than test, so must be $\test = {k_1 }$.
    But then the yes-outcome of $\test = {k_1}$ overlaps the yes-outcome of $\test < {k_j}$, contradicting Observation~\ref{obs: ast}.
    So \emph{if a yes-outcome leaves the path from any less-than test, the test's key is $k_1$.}
    By symmetric reasoning,
    \emph{if a no-outcome leaves the path from any less-than test, the test's key is $k_4$.}

    It follows that any inequality test in $\{\uA,\uB,\uC,\uF\}$
    must be in $\uA$ and/or $\uF$,
    implying that $\uB$ and $\uC$ do equality tests, so $k_2 < k_3$.

    For all $q\ge k_3$, none of the following hold: $q = k_1$, $q = k_2$ (using here $k_2 < k_3$), or $q < k_2$.
    So the no-outcome of $\test < {k_3}$ is disjoint with the yes-outcomes of $\test = {k_1}$,  $\test = {k_2}$,  and $\test < {k_1}$.
    By the conclusions of the preceding paragraphs,
    these include all outcomes that leave the path from the nodes with keys $k_1$ and $k_2$.
    Similarly, for all $q < k_3$, none of the following hold: $q = k_3$, $q= k_4$, or $q \ge k_4$,
    so the yes-outcome of $\test < {k_3}$ is disjoint with
    all outcomes that leave the path from the nodes with keys $k_3$ and $k_4$.
    This proves Lemma~\ref{lemma: bisectable}.
  \end{proof}

  Returning to the proof of Theorem~\ref{thm: intermediate}(ii), consider replacing $T_{\uA}$ in $T$
  by the subtree $T'$ obtained by \emph{bisecting} $T_{\uA}$ around the new node $\test < {k_3}$,
  in the following two steps (shown in Figure~\ref{fig: bisect0}).
  First, make a subtree with root $\test < {k_3}$, whose yes- and no-subtrees
  are each a copy of $T_{\uA}$.
  (Note that this subtree is a correct replacement for $T_{\uA}$.)
  For each outcome $\uito'$ ($1\le i\le 4$) that leaves the path $\uA\to\cdots\to\uFp$,
  per Lemma~\ref{lemma: bisectable}, the outcome is disjoint with
  either the yes-outcome or the no-outcome of  $\test < {k_3}$.
  If the outcome $\uito'$ is disjoint with the yes-outcome,
  \emph{splice it out} from the yes-copy of $T_{\uA}$.
  Otherwise (it is disjoint with the no-outcome)
  splice it out from the no-copy of $T_{\uA}$.

  Specifically, to \emph{splice out} the copy of  $\uito'$ means to
  remove that copy of $\ui$ and the subtree rooted at its child $\uip'$
  by replacing the subtree rooted at $\ui$
  by the subtree rooted at the current sibling of $\uip'$ (the other child of $\ui$),
  as happens in Figure~\ref{fig: bisect0}.
  The outcome from $\test{<}{k_3}$ that leads towards this copy of $\ui$
  is disjoint with the deleted outcome $\uito'$,
  so every search that reached the (now spliced out) copy of $\ui$
  continued through the sibling,
  so splicing out this copy of $\uito'$ preserves correctness.

  \begin{figure}[t]
    \centering
  \input{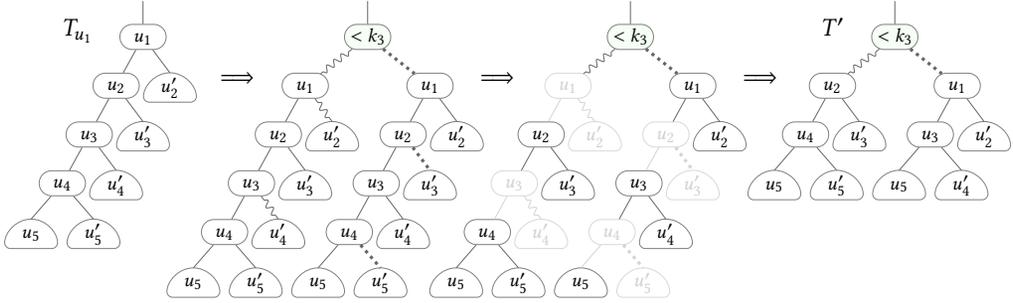}

    \caption{
      Bisecting $T_{\uA}$ around $\test < {k_3}$
      makes a new root $\test < {k_3}$,  makes each of its subtrees (yes and no) a copy of $T_{\uA}$,
      and then, for each outcome $\uito'$ leaving the path,
      splices out whichever copy of $\uito'$ is disjoint with the outcome of $\test < {k_3}$ that leads to that copy.
      (In the example here the squiggly outcomes are pairwise disjoint and the dashed outcomes are pairwise disjoint.)
      The outcomes on $\uA\to\cdots\to\uFp$ are drawn to the left. 
    }\label{fig: bisect0}
  \end{figure}

  By Lemma~\ref{lemma: bisectable}, two of the four outcomes are spliced out of the yes-copy of $T_{\uA}$,
  while the other two are spliced out of the no-copy, so
  the tree $T'$ obtained by bisecting $T_{\uA}$ around $\test < {k_3}$
  has one of the three forms shown in Figure~\ref{fig: bisect}(a), (b), or (c).
  Note that $T'$ contains two copies of the subtree $T_{\uFp}$ rooted at $\uFp$,
  so is not (in general) irreducible.
  However, it is still correct.

  Now we consider two cases, each reaching the desired contradiction.

  \begin{figure}[t]
    \centering
  \input{FIGURES/TREES/bisect}

    \caption{
      Bisecting $T_{\uA}$ around $\test < {k_3}$ yields a tree with one of forms (a), (b), or (c).
      The outcomes on the path $\uA\to\cdots\to\uD$ are drawn to the left,
      as is the outcome $\test < {k_3} \to \uA$.
    }\label{fig: bisect}
  \end{figure}

  \medskip
  \mycase{Case 1: The tree $T'$ has the form in Figure~\ref{fig: bisect}(a)}
  By inspection, the replacement increases the cost by
  $\weight{\uC'} + \weight{\uB'} - \weight{\uFp} - \weight{\uFp'} - \weight{\uF'}
  =\weight{\uC'} + \weight{\uB'} - \weight{\uC}$.
  By the optimality of $T$ this is non-negative.
  With Lemma~\ref{lemma: 2 w(h) < w(uC)} and $\weight h = \weight {\uB'}$ this implies
  $\weight{\uC'} \ge \weight{\uC}  -  \weight{\uB'} > \weight{\uB'}$.
  But then, by Theorem~\ref{thm: imbalance} applied to the path $\uA\to\uB\to\uC'$,
  the outcomes $\uAto'$ and $\uBto$ (leaving that path) are disjoint.
  By Observation~\ref{obs: ast}, outcomes $\uAto'$ and $\uBto'$  are also disjoint,
  contradicting Property~\ref{prop: laminarity} for $\uA$ and $\uB$.

  \medskip
  \mycase{Case 2: The tree $T'$ has one of the forms in Figure~\ref{fig: bisect}(b) or (c)}
  By inspection, either replacement increases the cost by
  $\weight{\uB'} - \weight{\uFp} - \weight{\uFp'} = \weight{\uB'} - \weight{\uF}$.
  With the optimality of $T$ this implies $\weight{\uB'} \ge \weight{\uF}$,
  which implies $\weight h \ge \weight k$, contradicting $\weight k > \weight h$.
  This proves Theorem~\ref{thm: intermediate}.
\end{proof}




\section{Some optimal tree is admissible}\label{sec: admissible}

{}\label{sec: admissible proof}


This section defines \emph{admissible} (Definition~\ref{def: admissible}),
then proves that some optimal tree is admissible (Theorem~\ref{thm: admissible}).
As mentioned in the introduction,
we first handle the case when all weights are distinct
(Lemma~\ref{lemma: admissible distinct})
then use a perturbation argument to extend to the general case.
The perturbation argument requires a globally consistent tie-breaking for equal-weight keys.

Let $T$ be any irreducible tree for a feasible \twoWCDT instance $I=(Q, w, \calC, K)$.

\begin{definition}[ordering queries by weight]
  For any query subset $R\subseteq Q$ and integer $i\ge 0$
  define $\heaviest i {R}$ to contain the $i$ heaviest queries in $R$  (or all of $R$ if $i\ge |R|$).
  For $q\in Q$, define $\heavier q$ to contain the queries (in $Q$) that are heavier than $q$.
  Define $\lighter q$ to contain the queries (in $Q$) that are lighter than $q$.
  Break ties among query weights arbitrarily but consistently throughout.
\end{definition}

Formally, we use the following notation to implement the tie-breaking mentioned above.
Fix an ordering of $Q$ by increasing weight, breaking ties in favor of
queries that are smaller in the linear ordering of $Q$.
(This particular tie-breaking rule is only for concreteness.  Any consistent rule would work.)
For $q\in Q$ let $\bweight q$ denote the rank of $q$ in this sorted order.
Throughout, given distinct queries $q$ and $q'$, define $q$ to be lighter than $q'$ if $\bweight q < \bweight {q'}$
and heavier otherwise ($\bweight q > \bweight {q'}$).
So, for example $\heaviest i {R}$ contains the last $i$ elements in the ordering of $R$ by increasing $\bweight q$.
The symbol $\undefsymbol$ represents the undefined quantity $\arg\max \emptyset$.
Define $\bweight \undefsymbol = \weight \undefsymbol = -\infty$,
\,$\heavier \undefsymbol = Q$, and $\lighter \undefsymbol = \emptyset$.


\begin{definition}[intervals and holes]
  For any $\ell, r\in Q$, let $\qint \ell r$ and $\kint \ell r$
  denote the \emph{query interval} $\{q\in Q: \ell \le q \le r\}$
  and \emph{key interval} $\{k\in K: \ell \le k \le r\} = K\cap \qint \ell r$.

  Given any non-empty query subset $R \subseteq Q$,
  call $\qint {\min R} {\max R}$ the \emph{query interval} of $R$.
  Define $k^*(R)$ to be the heaviest key in $R$, if there is one
  (that is, $k^*(R)=\arg\max\{\bweight k : k\in K\cap R\}$).
  Define also $\holes {R} = \qint {\min R} {\max R}\setminus R$ to be the set of \emph{holes} in $R$.
  We say that a hole $h\in \holes {R}$ is \emph{light} if $\bweight h < \bweight {k^*(R)}$, and otherwise \emph{heavy}.

  The set of queries reaching a node $u$ in a tree $T$ is called $u$'s \emph{query set}, and denoted $Q_u$.
  The query interval, and light and heavy holes, for $u$ are defined to be those for $u$'s query set $Q_u$.
  Write $w(u)$ as a shorthand for $w(Q_u)$,
  where $w(R) = \sum_{q\in R} w(q)$ denotes the total weight in the query set $R \subseteq Q$.
\end{definition}

If $R$ contains no keys then $k^*(R)$ is $\undefsymbol$ (undefined),
so $\bweight{k^*(R)}$ is $-\infty$ and $R$ has no light holes.

Each hole $h\in \holes{Q_u}$ at a node $u$ in a tree $T$
must result from a failed equality test $\test = h$
at an ancestor $v$ of $u$ in $T$, so $h\in K$.
The hole is light if any heavier key (and therefore $k^*(Q_u)$) reaches $u$.
For example, in the optimal tree in Figure~\ref{fig: inversion}(a) (in which $K=[4]$\})
the query set $Q_{\test = 1}$ of node $\test = 1$
has light holes $3$ and $4$.
These are lighter than the heaviest key $k^*(Q_{\test = 1}) = 2$ reaching $\test{=}{1}$,
but (not coincidentally, as we shall soon see) are the two heaviest in the node's key interval \emph{minus $2$'s class}.
The light holes in the query set of $\test{=}{1}$'s (right) no-child are $1$, $3$, and $4$,
which are the three heaviest in the node's key interval minus $2$'s class.
The query sets of the nodes in the trees in Figure~\ref{fig: inversion}(b) and~\ref{fig: inversion}(c) have no light holes,
but these trees are not optimal.


\begin{definition}[admissible]\label{def: admissible}
  A non-empty query subset $R\subseteq Q$ is \emph{admissible} if
  the set of light holes in $R$ is empty or has the form
  \[
    \heaviest b {\, \setkint {R} \cap \lighter {k^*(R)}\setminus c \, }
  \]
  for some  $b\in [3]$  and $c\in \calC$ such that $k^*(R)\in c$.
  (Throughout, for $i\in\mathbb N$, let $[i]$ denote $\{1,2,\ldots, i\}$.)

  The tree $T$ (or any subtree) is \emph{admissible} if all its nodes have admissible query sets.
\end{definition}


By definition, the holes of any query set $R$
lie in $R$'s key interval $\setkint R$, and its light holes are those lighter than $k^*(R)$, the heaviest key in $R$.

We next show Lemma~\ref{lemma: admissible distinct}.
Here is the intuition.
We need to constrain how the heaviest-first property can fail at a node $u$ in $T$.
One way the property can fail (as illustrated in Figure~\ref{fig: inversion}(a)),
is that there is a single class $c$ that contains all of $Q_u$ except for a few scattered keys,
so that the optimal tree can use equality tests to pull out these ``stragglers'',
then use a single leaf (labelled with $c$) to handle the rest.
These stragglers can include a few keys lighter than $k^*(u)$,
whose removal creates light holes, violating the heaviest-first property.

In fact, the proof shows that the path from $u$ to such a leaf can have length at most four.
(The path may have up to two less-than tests.)
The lemma states that if $Q_u$ fails to be heaviest first (that is, $Q_u$ has light holes), it will still be admissible:
for some $b\in[3]$ and some class $c$ that can be assigned to $k^*(Q_u)$,
the light holes must be the $b$ heaviest keys in $R$'s interval that are lighter than $k^*(Q_u)$ and are not in $c$.
We can think of this as the heaviest-first property being preserved with respect to the keys \emph{minus those in $c$},
with the restriction that
at most three keys from $c$ can be
    exempted from being holes
in this way.
(This restriction to $O(1)$ keys is helpful for efficiency.)
As we see later, the number of possible admissible query sets will turn out to be small enough
to yield an efficient dynamic program.

As an exercise, consider the instance with query set $Q = [8]$, with classes and weights
as specified in the table below,
and key set $K = \braced{2,3,4,5,7}$.
(Keys are in underlined.)

\newcolumntype{C}[1]{>{\centering\let\newline\\\arraybackslash\hspace{0pt}}m{#1}}%
\newlength{\mycol}%
\setlength{\mycol}{0.35in}%
\begin{center}
  \begin{tabular}{|r|C{\mycol}|C{\mycol}|C{\mycol}|C{\mycol}|C{\mycol}|C{\mycol}|C{\mycol}|C{\mycol}|} \hline
    query & 1
    & \underline{2}
    & \underline{3}
    & \underline{4}
    &\underline{5}
    & 6
    & \underline{7}
    & 8
    \\ \hline
    classes	&$ C$
    &$ A, B$
    &$ B, D$
    &$ A, C$
    &$ B$
    &$ C$
    &$ C, D$
    &$ B$
    \\ \hline
    weight	& 10
    & 13
    & 67
    & 49
    & 27
    & 58
    & 38
    & 12
    \\ \hline
  \end{tabular}
  \smallskip
\end{center}

For the subset $R_1 = \braced{1,2,4,8}$, we have $k^*(R_1) = 4 \in A\cap C$.
Subset $R_1$ has three holes: a heavy hole $3$ and two light holes $5,7$.
In the above definition, choose $A$ for the class $c$ of $k^*(R_1)$.
Then $5$ and $7$ are the two heaviest keys in $ \setkint {R_1} \cap \lighter {k^*(R_1)} \setminus A$. So $R_1$ is admissible.
For the subset $R_2 = \braced{2,3,6,8}$, we have $k^*(R_2) = 3$, and three holes $4$, $5$ and $7$, all light.
Both classes ($B$ and $D$)
that contain $k^*(R_2)$ also contain one of the light holes, so $R_2$ is not admissible.

Perhaps counterintuitively, the admissibility of a set $R$ is \emph{not} determined solely by the subinstance
naively defined by $R$. (This instance is $I_R = (R, w_R, \calC_R, K_R)$,
where $w_R$ is $w$ restricted to $R$,
while $\calC_R$ is $\{c \cap R : c\in \calC\} \setminus \{\emptyset\}$, and $K_R$ is $K\cap R$.)
Admissibility of $R$ also depends on its set of light holes, in $K\setminus R$.
This will be important for the implementation.


\begin{lemma}\label{lemma: admissible distinct}
  If the instance has distinct weights and $T$ is optimal, then $T$ is admissible.
\end{lemma}

\begin{proof}
  \newcommand{\T}{T}
  Consider any node $u$ in $\T$.
  To prove the lemma we show that $u$'s query set is admissible.
  If $Q_u$ has no light holes, then we are done, so assume otherwise.
  Let $k^*=k^*(Q_u)$ be the heaviest key reaching $u$.
  Let $\lightholes_u=\holes {Q_u} \cap \lighter {k^*}$ be the set of light holes at $u$.
  Let $b=|\lightholes_u|$.
  Let $c$ be the class that $\T$ assigns to $k^*$ and $S = \setkint {Q_u} \cap \lighter {k^*}\setminus c$.
  We want to show $\lightholes_u = \heaviest b S$ and $b\in[3]$.

  First we show $\lightholes_u \subseteq S$.
  By definition, $\lightholes_u \subseteq \setkint {Q_u} \cap \lighter {k^*}$.
  For any light hole $h\in \lightholes_u$,  key $k^*$ is heavier than $h$ and reaches the ancestor $\test = h$ of $u$.
  Applying Theorem~\ref{thm: intermediate} to that ancestor, hole $h$ is not in $c$.
  It follows that $\lightholes_u \subseteq S$.

  Next (recalling $b=|\lightholes_u|$) we show \(\lightholes_u = \heaviest b S\).
  Suppose otherwise for contradiction.
  That is, there are $k\in S\setminus \lightholes_u\subseteq Q_u$ and $h\in \lightholes_u$ such that $k$ is heavier than $h$.
  Keys $k^*$ and $k$ reach the ancestor $\test = h$ of $u$.
  Applying Theorem~\ref{thm: intermediate} (twice) to that ancestor,
  the search path for $h$ starting from the no-child of $\test = h$
  ends both at $L_{k^*}$ and at the leaf $L_k$ for $k$.
  So $L_k = L_{k^*}$, which implies that $k$ is in $c$, contradicting $k\in S$.
  Therefore \(\lightholes_u = \heaviest b S\).

  Finally, we show that $b\le 3$.
  Let $h\in \lightholes_u$ be the light hole whose test node $\test = h$ is closest to the root.
  Key $k^*$ reaches $\test = h$ and weighs more than $h$.
  Applying Theorem~\ref{thm: intermediate} to $\test = h$ and key $k^*$,
  the path from $\test = h$ to $L_{k^*}$ has at most four nodes (including the leaf).
  Each light hole has a unique equality-test node on that path.
  So (using that $u$ is on this path)
  there are at most three light holes in $Q_u$.
\end{proof}


Now we use a perturbation argument to extend Lemma~\ref{lemma: admissible distinct} to the general case.
Recall that ``feasible'' means the instance has a correct tree.
As discussed in Section~\ref{sec: introduction}, not all instances do.

\begin{restatable}{theorem}{thmWellBehaved}\label{thm: admissible}
  If the instance is feasible, then some optimal tree is admissible.
\end{restatable}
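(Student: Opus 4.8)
The plan is an exchange argument over the set of optimal trees. Since the instance is feasible, an optimal \twoWCDT exists; among all optimal trees I would fix one, $T$, that minimizes a suitable potential $\Phi$. The potential should be designed so that (i) $\Phi(T) = 0$ already forces $T$ to be admissible, and (ii) if $T$ is not admissible, then a single splitting operation (Definition~\ref{def: splitting}) applied at a carefully chosen node produces a tree $T'$ that is again a valid \twoWCDT for the instance, satisfies $\cost(T') \le \cost(T)$, and has $\Phi(T') < \Phi(T)$. Properties (i)--(ii) contradict the minimality of $\Phi(T)$, which proves the theorem. (Equivalently, one may start from an arbitrary optimal tree and repeatedly apply splittings, using $\Phi$ to guarantee termination at an admissible tree.) A natural candidate for $\Phi$ is the vector listing, for each distinct weight value from heaviest to lightest, the sum of the depths in $T$ of the keys of that weight, compared lexicographically; then moving a heavier key up while moving a no-heavier key down strictly decreases $\Phi$. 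A cruder alternative is the number of admissibility-violating nodes, each weighted by its depth.

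Before the main argument I would record the routine consequences of optimality that get reused: a cut-and-paste argument shows that the subtree rooted at each node of $T$ is itself optimal for the sub-instance it solves, and the laminarity of the family of allowed tests (Property~\ref{prop: laminarity}) controls the structure of the set of queries reaching each node, even though, as the introduction stresses, this set need not be a contiguous interval once stand-alone equality tests are allowed.

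Assuming $T$ is not admissible, the next step is to exhibit a witness of the violation. By Definition~\ref{def: admissible}, a non-admissible tree contains an equality-test node $\test{=}{h}$ together with the configuration that splitting is built to repair --- informally, a strictly heavier key $h'$ reaching that node whose own equality test lies strictly deeper in $T$ (or is absent). I would choose the shallowest such witness, breaking ties from the left, apply the splitting operation associated with it, and invoke the correctness-preservation property of splitting (established where splitting is introduced) to conclude that $T'$ is a valid \twoWCDT.

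The heart of the proof --- and the step I expect to be the main obstacle --- is the combined cost-and-potential accounting for this splitting. One must verify that the re-hanging of subtrees that swaps the roles of $h$ and $h'$ (whatever the precise rearrangement prescribed by splitting) never increases the weighted sum of depths: informally because $h'$, being at least as heavy as $h$, ends up no deeper, while $h$ ends up no deeper than $h'$ was, and every other key's depth is either unchanged or offset within the laminar structure; and that, even in the cost-neutral case, $\Phi$ strictly decreases because a heavy key has strictly risen (or a violation has been eliminated outright). The delicate part is the case analysis on where $h'$ sits relative to the pivot of the equality test and on the less-than tests encountered above the witness, together with checking that splitting is genuinely applicable there; this is, I expect, precisely what the general laminarity framework and the general form of the splitting operation were introduced to make uniform. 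Once cost non-increase and a strict decrease of $\Phi$ are established, $T'$ contradicts the choice of $T$, and the proof is complete.
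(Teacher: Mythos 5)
Your proposal takes a genuinely different route from the paper's, but it also has gaps that would prevent it from closing.

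The most fundamental issue is that you have misidentified what a witness of non-admissibility looks like. You describe the violating configuration as ``an equality-test node $\test{=}{h}$ together with \ldots a strictly heavier key $h'$ reaching that node whose own equality test lies strictly deeper in $T$ (or is absent).'' That is a violation of the \emph{heaviest-first} property, not of admissibility. Admissibility (Definition~\ref{def: admissible}) is precisely a \emph{relaxation} of heaviest-first that \emph{permits} such configurations, as long as the resulting light holes number at most three, are taken heaviest-first from the relevant key set, and avoid the class containing $k^*$. This distinction is the whole point of the paper: Figure~\ref{fig: inversion} shows instances where \emph{no} optimal tree is heaviest-first, so an exchange argument that tries to eliminate every heaviest-first violation cannot terminate at an optimal tree. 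Your $\Phi$ would keep decreasing until $T$ were heaviest-first, at which point cost would have strictly increased somewhere along the way --- contradicting your claimed invariant that cost is non-increasing, or revealing that the invariant was never proved. A correct exchange/potential argument would have to target only the \emph{non-admissible} configurations (more than three light holes, a light hole inside the class of $k^*$, or light holes not taken heaviest-first), and it is not at all clear what local surgery repairs exactly those without touching the benign light holes that the optimal tree must retain.

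The second issue is that the entire technical content is deferred. You correctly anticipate that ``the combined cost-and-potential accounting for this splitting'' is the heart of the matter and ``the main obstacle,'' but you give no argument there --- only an informal plausibility remark. That accounting is where the paper spends essentially all of Section~\ref{sec: admissible}: the weight inequalities of Lemmas~\ref{lemma: path split bound} and~\ref{lemma: generic split bound} (obtained by splitting and invoking optimality), the structural Theorem~\ref{thm: heavy path}, and then Lemma~\ref{lemma: path distinct}, which in particular requires Claim~\ref{cla: w(u2') >= w(u3)} and an intricate case analysis on the types of tests along the path to bound the number of light holes by three. Your proposal contains none of this.

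Finally, note that the paper's logic is not ``iteratively improve with a potential function'' at all. The paper first proves (Lemma~\ref{lemma: admissible distinct}) that, for distinct weights, \emph{every} irreducible optimal tree is already admissible --- splitting is used only as an analysis device to derive contradictions from hypothetical non-admissibility, never to actually transform the tree. It then handles tied weights by a perturbation argument. Your potential $\Phi$ (lexicographic vector of depth-sums per weight class) is an attempt to absorb the tie-breaking into the exchange argument directly, which is a reasonable instinct, but you would still need the full distinct-weights structural analysis as the base of that argument, and you would need to show that $\Phi$ interacts correctly with splitting --- which, again, moves entire subtrees and not just two keys, so ``heavier key goes up, lighter key goes down'' is not what actually happens.
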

\begin{proof}
  Assume the instance $I=(Q, w, \calC, K)$ is feasible.
  Recall that $\bweight q$ is the rank of $q$
  in the sorting of $Q$ by weight, breaking ties consistently,
  as defined at the start of the section.

  Let $I^*=(Q, \pweightName, \calC, K)$ be an instance obtained from $I$
  by perturbing the query weights infinitesimally so that (i) the perturbed weights are distinct
  and (ii) sorting $Q$ by $\pweightName$   gives the same order as sorting by $\bweightName$.
  Specifically, take $\pweight q = \weight q + \delta\, \bweight q$,
  for $\delta$ such that $0<\delta < \epsilon/n^3$,
  where $\epsilon>0$ is the minimum of two
  quantities:
  the minimum absolute difference between any two distinct weights
  and
  the minimum absolute difference in cost
  between any two irreducible trees with distinct costs,
  using here that there are finitely many irreducible trees.
  Recall also that $\bweight q \in [n]$.

The concept of tree irreducibility (defined in Section~\ref{sec: definitions}) is independent
of the weight function ($\weightName$ or $\bweightName$).
So the sets of irreducible trees for $I$ and for $I^*$ are the same.

  Let $T^*$ be an optimal, irreducible tree for $I^*$ (so also irreducible for $I$).
  Applying Lemma~\ref{lemma: admissible distinct} to $T^*$ and $I^*$,
  tree $T^*$ is admissible for $I^*$.
  By inspection of Definition~\ref{def: admissible},
  whether $T^*$ is admissible for an instance
  depends only on $T^*$ and the (tie-broken) ordering of the queries by weight.
  Since these orderings are the same in $I$ and $I^*$,
  the tree $T^*$ is admissible for $I$ if and only if it is admissible for $I^*$.

  To finish we observe that $T^*$ is also optimal for $I$.
  For any tree $T'$, let $\cost{T'}$ and $\pcost{T'}$ denote the costs of $T'$
  under weight functions $w$ (for $I$) and $\pweightName$ (for $I^*$), respectively.
  Recall that earlier we fixed $T$ to be an irreducible tree for $I$.
  Assume that $T$ is also optimal for $I$.
  Then
  \begin{equation*}
    \cost {T^*}
    \;\le\; \pcost {T^*}
    \;\le\; \pcost {T}
    \le \cost {T} + n^3\delta
    \;<\; \cost {T} + \epsilon.
  \end{equation*}
  So by the choice of $\epsilon$ we have $\cost{T^*} \le \cost T$.
  Therefore $T^*$ is optimal for $I$ as well.
\end{proof}



\section{Algorithm}\label{sec: algorithm}

This section proves the main result:

\begin{restatable}{theorem}{thmAlgorithm}\label{thm: algorithm}
  There is an $O(n^3 m)$-time algorithm for finding a minimum-cost \twoWCDT.
\end{restatable}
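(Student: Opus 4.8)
The plan is to turn Theorem~\ref{thm: admissible} into a polynomial-size dynamic program. By that theorem it suffices to search, among all \emph{admissible} \twoWCDT{}s, for one of minimum cost, so the algorithm will be a memoized recurrence whose subproblems are exactly the node-situations that arise in admissible trees. The two things to pin down are: that there are only polynomially many such subproblems, and that the recurrence both respects admissibility and composes admissible subtrees into admissible trees (so that restricting the search to admissible trees loses nothing).

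\textbf{Subproblems.} Each node $v$ of a \twoWCDT is reached by a set $S\subseteq Q$ of queries, and since the allowed tests induce a laminar family (Property~\ref{prop: laminarity}, and the splitting machinery of Definition~\ref{def: splitting}), $S$ is an interval $\qint{i}{j}$ of $Q$ with a set of \emph{holes} removed --- the keys on which equality-test ancestors of $v$ branched away from $v$. The first real step is to extract from the proof of Theorem~\ref{thm: admissible} that admissibility severely constrains these holes: the holes of $v$ lying in $\qint{i}{j}$ form a nested chain of locally heaviest keys, so that, together with the bookkeeping of which acceptable classes remain for the queries in $S$, the situation at $v$ is captured by $i$, $j$, and a small amount of auxiliary state. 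Counting how many such states are reachable then gives $O(n^2m)$ subproblems in all: the interval contributes $O(n^2)$, and the auxiliary state contributes a factor $m$ once it is charged to query--class incidences; when every query has a single class this auxiliary factor is just the $O(n)$ choices of how many heavy keys have been removed, for $O(n^3)$ subproblems in total, matching the \twoWCST bound.

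\textbf{Recurrence.} For a subproblem with query set $S=\qint{i}{j}\setminus(\text{holes})$ and live-class data $\calC$: if some single class is acceptable for every query in $S$, the subtree may be a leaf, of cost $\weight{S}$; a subproblem whose queries admit no common or individual acceptable class returns $\infty$, which is how infeasibility is detected. Otherwise the root is a test. A less-than test $\test{<}{k}$ splits $S$ into $S\cap\qint{i}{k-1}$ and $S\cap\qint{k}{j}$, updating $\calC$ on each side, and we recurse. An equality test $\test{=}{h}$ produces the leaf $\{h\}$ on one side and, on the other, the subproblem with query set $S\setminus\{h\}$ and the same $\calC$; this branch is allowed only when $h$ meets the condition of Definition~\ref{def: admissible} relative to $S$, which depends only on the weights of the keys currently in $S$ and so is checkable in $O(1)$ time after $O(n)$ preprocessing of prefix weights and per-interval heaviest keys. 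The cost of the subproblem is $\weight{S}$ plus the minimum, over the $O(n)$ admissible root tests, of the sum of the two children's costs. Because admissibility is a conjunction of one local condition per equality-test node, each depending only on the set of queries reaching that node, the children of an admissible node are themselves admissible; hence the recurrence ranges over exactly the admissible trees, and by Theorem~\ref{thm: admissible} its optimum equals the global optimum. An optimal tree is then recovered by the usual backtracking over the table.

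\textbf{Complexity and the main obstacle.} With $O(n^2m)$ subproblems and $O(n)$ work each, the memoized recurrence runs in $O(n^3m)$ time, which is the claim. The step I expect to be the real work is the subproblem count: one must show that admissibility collapses the a priori exponential space of hole sets into an $O(1)$-describable, polynomially bounded family, and that this collapse is preserved by every branch of the recurrence. This is precisely the payoff of the laminarity/splitting analysis behind Theorem~\ref{thm: admissible}; without it --- as the earlier flawed \twoWCST attempts illustrate --- the holes could be an arbitrary subset of the interval and no polynomial dynamic program is available. A secondary point to verify carefully is that the class-bookkeeping in $\calC$ can be maintained incrementally along less-than and equality splits without blowing up the state, so that the final accounting really is $O(n^3m)$ rather than, say, $O(n^3m\log m)$ or worse.
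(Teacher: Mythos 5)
Your high-level plan matches the paper's: by Theorem~\ref{thm: admissible} it suffices to minimize over admissible trees, you parameterize subproblems by admissible query sets, count $O(n^2m)$ of them, spend $O(n)$ per subproblem, and get $O(n^3m)$ total. But several load-bearing details are wrong or missing, and they are exactly where the paper does its work.

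First, your characterization of the admissible subproblem state is not the one the paper uses, and as stated it does not obviously give a polynomial bound. You describe the holes at a node as ``a nested chain of locally heaviest keys'' together with ``bookkeeping of which acceptable classes remain for the queries in $S$.'' The second part is dangerous: tracking a \emph{set} of remaining classes would be exponential. The paper's Definition~\ref{def: admissible} is much more specific: the light holes of an admissible $R$ must be $\heaviest b{\,\setkint{R}\cap\lighter{k^*(R)}\setminus c\,}$ for some $b\in[3]$ and a \emph{single} class $c$ with $k^*(R)\in c$. Thus each admissible $R$ is captured by $(\min R,\max R,k^*(R))$ plus, optionally, a pair $(b,c)$ with $b\le 3$ and $k^*(R)\in c$, which is what gives $O(n^3 + n^2m)=O(n^2m)$ sets. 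Without this specific form, the count $O(n^2m)$ is an assertion, not a consequence.

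Second, the claim that ``the children of an admissible node are themselves admissible'' is false, and it cannot be used to justify that the recurrence ranges over exactly the admissible trees. Admissibility is a property of the set of queries reaching a node, and applying a less-than test to an admissible $R$ can change $k^*$ on one side, turning formerly heavy holes into light ones that fail the required pattern. The paper handles this by making the restriction explicit in Recurrence~\ref{recurrence}: $u$ ranges only over tests for which \emph{both} $\setyes{R}u$ and $\setno{R}u$ are admissible. Your recurrence needs the same guard for less-than tests, not just equality tests, and the guard must be checkable efficiently, which brings up the next point.

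Third, you under-specify how each step of the recurrence is carried out in $O(n)$ (amortized) time. You say admissibility of an equality-test branch ``depends only on the weights of the keys currently in $S$ and so is checkable in $O(1)$ time after $O(n)$ preprocessing of prefix weights.'' That is not true: Definition~\ref{def: admissible} also depends on class membership (the class $c$ containing $k^*(R)$), so a pure weight-prefix structure is not enough. The paper sidesteps this by enumerating all admissible sets up front into a dictionary keyed by the signature $(\min R,\max R,k^*(R),H(R))$ (where $|H(R)|\le 3$), and testing admissibility by dictionary lookup. Supporting constant-time lookup on that key is itself nontrivial (the paper uses either universal hashing, giving a randomized algorithm, or a trie over the $\le 3$ hole keys, trading space for determinism), and identifying which admissible sets are leaves, i.e.\ satisfy $R\subseteq c$ for some $c$, requires its own $O(n^3m)$ preprocessing pass. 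None of this is addressed in your proposal, and without it the $O(n)$-per-subproblem bound does not follow.

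In short, the skeleton of your proof is the right one, but the specific structural consequence of Theorem~\ref{thm: admissible} (the $(b,c)$ form of the light-hole set), the explicit admissibility guard in the recurrence, and the dictionary/preprocessing machinery that makes the $O(n)$-per-subproblem claim true are all missing or stated incorrectly; these are the content of the paper's proof, not afterthoughts.
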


\begin{proof}
  Fix the input, an arbitrary $\twoWCDT$ instance $I =  (Q, w, \calC, K)$.
  Let $\AdmSubsets$ denote the set of admissible query subsets of $Q$ (per Definition~\ref{def: admissible}).
  For any $R\in \AdmSubsets$,
  if $R$ is contained in some class,
  then the tree for $R$ consists of a single leaf (labeled with some such class).
  Otherwise an admissible tree for $R$
  consists of any root $u$ whose test partitions $R$ into $(\setyes{R}{u}, \setno{R}{u})$
  (the bipartition of $R$ into those values that satisfy $u$ and those that don't),
  with $u$'s yes-subtree being any admissible tree for $\setyes{R}{u}$
  and $u$'s no-subtree being any admissible tree for $\setno{R}{u}$.
  So, defining $\acost R$ to be the minimum cost of any subtree for $R$ that is admissible for $I$,\footnote{An observant reader
  may notice that it can be that $\acost R > \cost R$ (the minimum cost of any tree for $T$),
  but if so $R$ cannot actually occur as the query set of any node in an optimal tree.}
  the following recurrence holds:


  \begin{recurrence}\label{recurrence}
    For any $R\in \AdmSubsets$,
    \[
      \acost {R} =
      \begin{cases}
        0 & ((\exists c\in \calC)\, R\subseteq c) \\
        \weight{R} + \min_{u}\big(  \acost {\setyes {R} u} + \acost {\setno {R} u} \big), & (\textit{otherwise})
      \end{cases}
    \]
    where $u$ ranges over the allowed tests (defined in Section~\ref{sec: definitions})
    for which $\setyes{R}{u}$ and $\setno{R}{u}$ are in $\AdmSubsets$ (that is, admissible).
    If there are no such tests the minimum is infinite.
  \end{recurrence}


  The algorithm returns $\acost Q$, the minimum cost of any admissible tree for $I=(Q, w, \calC, K)$.
  By Theorem~\ref{thm: admissible}, this equals the minimum cost of any tree for $I$, so the algorithm is correct.
  Next we describe how to achieve the desired running time.


  There are $O(n^2 m)$ admissible query sets.
  (Indeed, for any admissible set $R$,
  if $R$ has no light holes it is determined by the triple $(\min R, \max R, k^*(R))$.
  Otherwise, per Definition~\ref{def: admissible},
  $R$ is determined by a tuple $(\min R, \max R, k^*(R), b, c)$,
  where $(b, c) \in [3] \times \calC$ with $k^*(R) \in c$.)
  So $O(n^2 m)$ subproblems arise in recursively evaluating $\acost Q$.
  To achieve the desired time bound,
  it suffices to evaluate the right-hand side of Recurrence~\ref{recurrence} for any given $R\in\AdmSubsets$ in $O(n)$ amortized time.
  Next we describe how to do this.

  Assume (by renaming elements in $Q$ in a preprocessing step) that $Q = [n]$.
  Given a non-empty query set $R\subseteq Q$, define the \emph{signature} of $R$ to be
  \[
    \tau(R) = (\min R, \max R, k^*(R), \lightholes(R)),
  \]
  where $\lightholes(R) = \holes {R} \cap \lighter {k^*(R)}$ is the set of light holes in $R$.

  For any $R$, its signature is easily computable in $O(n)$ time
  (for example, bucket-sort $R$ and then enumerate the hole set $\qint \ell r \setminus R$ to find $\lightholes(R)$).
  Each signature is in the set
  \[
    \calS = Q\times Q \times (K \cup \{\undefsymbol\}) \times 2^Q
  \]
  of \emph{potential signatures}.
  Conversely, given any potential signature $t=(\ell, r, k, H')\in\calS$,
  the set $\tau^{-1}(t)$ with signature $t$, if any, is unique and computable from $t$ in $O(n)$ time.
  Specifically, $\tau^{-1}(t)$ is equal to the query set $Q_{(t)} = \qint \ell r \setminus ( (K\cap \heavier   k)  \cup H')$,
  provided that $Q_{(t)}$ is non-empty and has signature $\tau(Q_{(t)}) = t$; otherwise
  $\tau^{-1}(t)$ is undefined.
  (In general, the signature of $Q_{(t)}$ may be different from $t$;
  for example we may have $k \notin \qint{\ell}{r}$, or  one of $\ell$, $r$ may be in $H'$.)


  To finish the proof we prove Lemma~\ref{lemma: time}:

  \begin{restatable}{lemma}{lemmaTime}\label{lemma: time}
    After an $O(n^3 m)$-time preprocessing step,
    given the signature $\tau(R)$ of $R\in\AdmSubsets$,
    the right-hand of Recurrence~\ref{recurrence} is computable in amortized time $O(n)$.
  \end{restatable}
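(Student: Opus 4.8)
The plan is to show that, after preprocessing, evaluating the right-hand side of Recurrence~\ref{recurrence} for a fixed $R\in\calA$ reduces to iterating over the $O(n)$ allowed tests $u$ and, for each, looking up $\cost{\setyes R u}$ and $\cost{\setno R u}$ in $O(1)$ amortized time. The only non-trivial point is that the two children $\setyes R u$ and $\setno R u$ of an admissible $R$ are themselves admissible query sets (there are only $O(n^2m)$ of them), so their costs have been (or will be) memoized; the issue is looking them up quickly given only the cheap-to-compute signature $\tau(R)$. So the first step is to catalog, for each allowed test $u$, what $\tau(\setyes R u)$ and $\tau(\setno R u)$ are as a function of $\tau(R)=(\min R,\max R, k^*(R), H(R))$ and $u$. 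For a less-than test $\test<k$ the two children are $\qint{\min R}{k-1}\setminus H(R)$ restricted appropriately and $\qint k {\max R}\setminus H(R)$; their min, max, and $k^*$ are determined by $k$ and by $k^*(R)$ together with the (precomputable) heaviest key in each key-prefix and key-suffix of $Q$, and the light-hole sets are sub-multisets of $H(R)$ obtained by discarding holes on the wrong side or holes that become heavy because $k^*$ dropped. For an equality test $\test=h$ with $h\in Q_u\cap K$, the yes-child is the singleton $\{h\}$ (handled by a leaf, cost $0$, unless infeasible) and the no-child is $R\setminus\{h\}$, whose signature differs from $\tau(R)$ only by inserting $h$ into the hole set (and possibly changing $k^*$ if $h=k^*(R)$, in which case $R\setminus\{h\}$ has a light-hole set that must be re-checked for admissibility). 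In every case $\tau$ of each child is computable from $\tau(R)$ and $u$ in $O(1)$ time once we have the key-prefix/suffix-max tables and the two sorted lists that underlie $H(R)$.

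The second step is to make the $O(1)$-amortized lookup precise. Maintain a hash table (or, since $Q=[n]$, a direct-address structure keyed on the $O(n^2m)$-sized signature space $\calS$ restricted to admissible signatures) mapping each admissible signature to its memoized cost. Computing $\tau(R)$ itself from $R$ costs $O(n)$, which we charge to the $O(n)$ work of scanning the tests, so it is $O(n)$ amortized per subproblem, hence $O(1)$ amortized per test; each child signature is then $O(1)$ to build and $O(1)$ to look up. The total work at $R$ is therefore $O(n)$.

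The third step is the preprocessing. We need: (a) $Q$ relabeled as $[n]$ and a weight-rank array $\bweightName$, $O(n\log n)$; (b) prefix sums of $w$ over $[n]$ so that $w(R)=w(\qint{\min R}{\max R})-w(H(R))$ is $O(1)$ after we know $w(H(R))$, and since $H(R)$ is listed in $\tau(R)$ we can compute $w(H(R))$ in $O(|H(R)|)=O(n)$ during the same scan that builds $\tau(R)$; (c) for each index $i$, the heaviest key in $\kint{1}{i}$ and in $\kint{i}{n}$, and more generally, for each pair $(i,j)$ the heaviest key in $\kint i j$ — this is an $O(n^2)$ table; (d) the "single-leaf" test: for each $(\ell,r)$ whether $\qint\ell r\subseteq c$ for some $c\in\calC$, and indeed a strengthened version handling holes, which takes $O(n^2m)$ or $O(nm)$ time; (e) the class-membership data needed for admissibility checks, i.e. given $k^*$ and a candidate light hole set, deciding whether it has the form $\heaviest b{\,\setkint R\cap\lighter{k^*}\setminus c\,}$ for some $b\in[3]$, $c\in\calC$ with $k^*\in c$ — this is where the bound $O(n^3m)$ preprocessing comes from, since for each of the $O(n)$ choices of $k^*$ and each of the $O(m)$ classes $c$ containing it we precompute the three heaviest elements of $\setkint R\cap\lighter{k^*}\setminus c$ as $R$ ranges over its $O(n^2)$ possible intervals, or a table from which these are derivable. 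All told the preprocessing is $O(n^3m)$, matching the statement.

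I expect the main obstacle to be item (e): verifying that a proposed child query set is admissible, and identifying its signature unambiguously, in $O(1)$ amortized time. The subtlety is that the light-hole set of a child is a specific sub-multiset of $H(R)$ (those holes still light relative to the possibly-smaller $k^*$ of the child and on the correct side of the splitting test), and one must confirm it matches the canonical "heaviest-$b$-minus-some-class-$c$" form required by Definition~\ref{def: admissible}; doing this requires having precomputed, per $(k^*, c)$ pair and per interval, the relevant top-three elements, and carefully arguing that the child inherits the same witnessing pair $(b,c)$ (or a determined variant) from $R$, so that no search over $b$ or $c$ is needed at query time. Getting the bookkeeping of holes-vs-$k^*$ exactly right across the less-than and equality cases is the delicate part; the arithmetic ($w(R)$, depths, prefix sums) is routine.
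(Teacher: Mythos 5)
Your proposal follows broadly the same route as the paper's proof---catalog the $O(n^2 m)$ admissible sets in a preprocessing step, then evaluate Recurrence~\ref{recurrence} by memoized recursion indexed by signatures, spending $O(n)$ amortized work per subproblem---but several of the specific claims do not hold up.

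The biggest gap is item (c). An $O(n^2)$ table of ``heaviest key in $\kint i j$'' does \emph{not} let you recover $k^*(\setyes R u)$ or $k^*(\setno R u)$ from $\tau(R)$ in $O(1)$ time. The heaviest key in a child set is the heaviest key in the relevant range that is also \emph{in $R$}, and $R$ omits not only the (at most three) light holes in $\lightholes(R)$ but also \emph{all} keys in $\setkint R$ that are heavier than $k^*(R)$ (the heavy holes, of which there may be $\Theta(n)$). So the query you actually need is ``heaviest key in $\kint i j$ that is lighter than $k^*(R)$, excluding up to three specified elements''---something a pairwise range-max over $K$ cannot answer. The paper sidesteps this entirely: given $\tau(R)$ it spends $O(n)$ to materialize $R$, sorts it as $(q_1,\dots,q_j)$, and sweeps left to right maintaining $k^*(R_i)$ and the restricted light-hole set incrementally, $O(1)$ per step; this yields the signatures of all $O(n)$ less-than children in $O(n)$ total (with a mirrored sweep for the $\overline R_i$), giving the amortized $O(1)$ per test without any range-max table. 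The at most three equality tests whose key is $\min R$, $\max R$, or $k^*(R)$ are handled by an explicit $O(n)$ recomputation; the remaining equality tests merely insert one element into the hole set and are $O(1)$.

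A second inaccuracy is the time accounting for items (d) and (e). The ``strengthened leaf test handling holes'' is not $O(n^2 m)$ or $O(nm)$; it is precisely where the paper spends the full $O(n^3 m)$ preprocessing budget. The paper iterates over all $O(n^3)$ triples $(\ell,r,k)$, considers the at most $4|\calC_\ell|$ admissible sets sharing that triple, reduces $R\subseteq c$ to ``$R_\emptyset\setminus c$ is a subset of $\lightholes(R)$,'' and matches by radix sort, for $O(n|\calC_\ell|)$ per triple and $O(n^3 m)$ total. Your item (e), by contrast, is not a separate $O(n^3 m)$ computation: once the dictionary of admissible signatures is built by direct enumeration, checking whether a candidate child is admissible is a single dictionary lookup; no per-$(k^*,c)$ top-three precomputation is needed at query time. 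Finally, the ``direct-address structure keyed on the $O(n^2m)$-sized signature space'' needs more care: the raw signature space $Q\times Q\times(K\cup\{\undefsymbol\})\times 2^Q$ is exponential, so a canonical index is not free; the paper instead uses hashing or an $n\times n\times n$ array of constant-depth tries keyed on the sorted size-$\le 3$ light-hole set.
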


  \begin{proof}
    Note that the admissible sets can be enumerated in $O(n^3 m)$ time as follows.
    First do the $O(n^3)$ admissible sets without light holes:
    for each $(\ell, r, k) \in  Q\times Q \times (K \cup \{\undefsymbol\})$,
    output $\tau^{-1}(\ell, r, k, \emptyset)$ if it exists.
    Next do the $O(n^2 m)$ admissible sets with at least one light hole,
    following Definition~\ref{def: admissible}:
    for each $(\ell, r, k, b, c) \in  Q\times Q \times K \times [3]\times \calC$ with $k\in c$,
    letting $H' =  \heaviest b  {\kint {\ell} {r} \cap \lighter   k\setminus c}$,
    if $H'$ is well-defined then output $\tau^{-1}(\ell, r, k, H')$ if it exists.

    The preprocessing step initializes the dictionary for admissible query subsets and identifies the leaves.
    Here are the details.


    \newcommand{\proofPara}[1]{\medskip\par\noindent\emph{#1}}

    \proofPara{Initialize a dictionary $\Dictionary$ holding a record $\Dictionary[\tau(R)]$ for each set $R$ in $\AdmSubsets$.}
    To be able to determine whether a given query set $R$ is in $\AdmSubsets$,
    and to store information (including the memoized cost) for each admissible set $R$,
    build a dictionary $\Dictionary$ that holds a record $\Dictionary[\tau(R)]$
    for each $R\in\AdmSubsets$, indexed by the signature $\tau(R)$.
    For now, assume the dictionary $\Dictionary$ supports constant-time access to the record $\Dictionary[\tau(R)]$
    for each $R\in\AdmSubsets$ given the signature $\tau(R)$ of $R$.
    (We describe a suitable implementation later.)
    Initialize $\Dictionary$ to hold an empty record $\Dictionary[\tau(R)]$ for each $R\in \AdmSubsets$
    by enumerating all $R\in \AdmSubsets$ as described above.
    This takes $O(n^3 m)$ time.


    \proofPara{Identify the leaves.}
    To identify the sets $R\in\AdmSubsets$ that are leaves    (that is, such that $(\exists c\in \calC)~R\subseteq c$)
    in $O(n^3 m)$ time,
    for each triple $(\ell, r, k) \in Q\times Q \times (K \cup \{\undefsymbol\})$, do the following two steps.

    \begin{enumerate}
		
    \item Let $\calR \subseteq \AdmSubsets$ contain the admissible sets $R$ such that $\tau(R) = (\ell, r, k, H')$ for some $H'$.
      Assume $\calR$ is non-empty (otherwise move on to the next triple).
      Let $R_\emptyset$ be the set with signature $(\ell, r, k, \emptyset)$,
      so that each $R\in\calR$ is a subset of $R_\emptyset$ and can be written as $R_\emptyset \setminus \lightholes(R)$.
      Let $\calC_\ell$ contain the classes $c\in\calC$ such that $\ell\in c$.
      Observe that $|\calR|  \le 4 |\calC_\ell|$, because $R_\emptyset$ is unique for the triple $(\ell, r, k)$,
      and then each $R\in\calR$ is determined from $R_\emptyset$  by the class $c\in \calC$ and the number $b\in [3]$ of light holes, per Definition~\ref{def: admissible}.

    \item Each set $R\in \calR$ contains $\ell$,
      so $R$ is a leaf if and only if $R\subseteq c$ for some $c\in \calC_\ell$.
      The condition $R\subseteq c$ is equivalent to $R_\emptyset \setminus \lightholes(R) \subseteq c$,
      which is equivalent to $R_\emptyset \setminus c \subseteq \lightholes(R)$.
      So, any given set $R\in \calR$ is a leaf if and only if some subset of $\lightholes(R)$
      equals $R_\emptyset \setminus c$ for some $c\in\calC_\ell$.
      Identify all such $R$ in time $O(n |\calR| + n |\calC_\ell|)$.
      (Recalling that $|\lightholes(R)|\le 3$ for each $R\in \calR$, this is straightforward.
      One way is to construct the collection $\calH = \bigcup_{R\in \calR} 2^{\lightholes(R)}$ of subsets of the light-hole sets.
      Order the elements within each subset in $\calH$ by increasing value,
      then radix sort $\calH$ into lexicographic order.
      Do the same for the collection $\calL = \{R_\emptyset \setminus c: c\in\calC_\ell, |R_\emptyset\setminus c|\le 3\}$.
      Then merge the two collections to find the elements common to both.
      A given $R\in\calR$ is a leaf if and only if some subset of $\lightholes(R)$ in $\calH$ also occurs in $\calL$.)

    \end{enumerate}

    As noted above, we have $|\calR| \le 4 |\calC_\ell|$,
    so the time spent above on a given triple $(\ell, r, k)$ is $O(n |\calR| + n |\calC_\ell|) = O(n|\calC_\ell|)$.
    Summing over all triples $(\ell, r, k)$,
    the total time is $O(n^2 \sum_{\ell\in Q} n |\calC_\ell|) = O(n^3 m)$.

    In $O(n^3 m)$ time, identify the $O(n^2 m)$ leaves $R\in\AdmSubsets$ as described above.
    For each, record in its entry $\Dictionary[\tau(R)]$ that $R$ is a leaf and that $\acost {R} = 0$.


    \proofPara{How to implement Recurrence~\ref{recurrence}.}
    Next we describe how to compute $\acost {R}$, given the signature $\tau(R)=(\ell, r, k, H')$
    of any set $R\in \AdmSubsets$, in $O(n)$ time.

    If the record $\Dictionary[\tau(R)]$ already holds a memoized cost for $R$, then we are done, so assume otherwise.
    (This implies that $R$ is not a leaf.)
    In $O(n)$ time, build $R$ from $\tau(R)$ and calculate the sum $\weight {R}$.
    Let $R=(q_1, q_2, \ldots, q_z)$ be $R$ in increasing order, computed using bucket sort.
    For every possible test node $u$,
    precompute the signatures $\tau(\setyes R u)$ and $\tau(\setno R u)$.
    Do this in two $O(n)$-time stages:
    one stage for all possible less-than tests,
    the other stage for all possible equality tests:

    \begin{description}

      \item{\textbf{Stage~1:}}
      Precompute the pair of signatures $\tau(\setyes R {\test < h})$
      and $\tau(\setno R {\test < h})$
      for every $h\in K$
      as follows:

      \begin{description}

        \item{1.1.}
        For $i \in \{0, 1, \ldots, z\}$,
        define $R_i = (q_1, q_2, \ldots, q_i)$ and $\overline R_i = (q_{i+1}, q_{i+2}, \ldots, q_z)$.
        For $h\in Q=[n]$,
        define $i(h)$ to be the index such that $\setyes R {\test < h}$ is $R_{i(h)}$
        and $\setno R {\test < h}$ is $\overline R_{i(h)}$.
        Precompute $i(h)$ for all $h\in Q$ in $O(n)$ total time.
        (Note that $i(h) = \max~ \{0\}\!\cup\! \{i \in [z] : q_i < h\}$.
        Take $i(1) = 0$, then for $i\gets 2, 3, \ldots, n$ take $i(h) = i(h-1) + 1$ if $q_{i(h-1)+1} < h$;
        otherwise take $i(h) = i(h-1)$.)

        \item{1.2.}
        Compute $k^*(R_i)$ and $k^*(\overline {R_i})$ for all $i$.
        (These are the heaviest keys in $R_i$ and in $\overline {R_i}$, respectively.
        First take $k^*(R_0) = \undefsymbol$,
        then, for $i\gets 1, \ldots, z$,
        take $k^*(R_i) = q_i$ if  $q_i\in K$ and $q_i$ is heavier than $k^*(R_{i-1})$,
        and otherwise $k^*(R_i) = k^*(R_{i-1})$.
        Take $k^*(\overline {R_i}) = k^*(R)$ if $i<z$ and $k^*(R) \ge q_{i+1}$; otherwise take $k^*(\overline {R_i}) = \undefsymbol$.)

        \item{1.3.} Compute the light-hole sets
        $\lightholes(R_i) = \{h\in H' : h \le q_i\}$
        and $\lightholes(\overline{R_i}) = \{h\in H' : h \ge q_{i+1}\}$.
        (Each such set can be computed in constant time from $H'$, as $|H'|\le 3$.)

        \item{1.4.} Finally, enumerate all $h\in K$.
        For each, compute the pair of signatures
        $\tau(\setyes R {\test < h})$
        and $\tau(\setno R {\test < h})$,
        using
        $\setyes R {\test < h} = R_{i(h)}$,
        $\setno R {\test < h} = \overline R_{i(h)}$,
        and (for $i=i(h)$),
        $\tau(R_i) = (q_1, q_i, k^*(R_i), \lightholes(R_i))$
        and
        $\tau(\overline {R_i}) = (q_{i+1}, q_z, k^*(\overline R_i), \lightholes(\overline R_i))$.
        (Given the results of the previous three steps, this takes constant time per $h$.)
      \end{description}

    \item{\textbf{Stage~2:}}
      Precompute the pair of signatures $\tau(\setyes R {\test = h})$
      and $\tau(\setno R {\test = h})$
      for every $h\in K\cap R$.
      For each such $h$, we have $\setyes {R} {\test = h} = \{h\}$, so $\tau(\setyes {R} {\test = h}) = (h, h, h, \emptyset)$,
      and $\tau(\setno R {\test = h})$ can be computed as follows:
      \begin{description}
      \item{2.1.}
        If $h\not\in \{\min R, \max R, k^*(R)\}$
        (using that $|R| \ge 2$, as $R$ is not a leaf, so $\setno {R} {\test = h} \ne \emptyset$)
        the signature $\tau(\setno {R} {\test = h})$ is $(\min R, \max R, k^*(R), H' \cup \{h\})$,
        which (as $|H'|\le 3$) is computable from $\tau(R)$ in constant time.
      \item{2.2.}
        Otherwise ($h$ is one of the three values in $\{\min R, \max R, k^*(R)\}$),
        using $\setno {R} {\test = h} = R \setminus \{h\}$,
        explicitly compute $\setno {R} {\test = h}$ and its signature in $O(n)$ time.
      \end{description}
    \end{description}

    Finally, for each pair of signatures $\tau(\setyes {R} u)$ and $\tau(\setno {R} u)$ enumerated above (in Step 1.4 or Stage 2),
    check whether $\setyes {R} u$ and $\setno {R} u$ are admissible
    (by checking, in constant time, whether their signatures have entries in $\Dictionary$).
    If so, compute the values of $\acost {\setyes {R} u}$ and $\acost {\setno {R} u}$
    recursively from their signatures.
    Then, for $\acost {R}$, returns (and memoize in $\Dictionary[\tau(R)]$)
    the value from the recurrence, namely
    $\weight{R} + \min_{u} (  \acost {\setyes {R} u} + \acost {\setno {R} u} )$,
    with the minimum taken over all such $u$.

    In this way, for each $R\in\AdmSubsets$, the time to evaluate the right-hand side of the recurrence is $O(n)$.
    There are $O(n^2 m)$ sets in $\AdmSubsets$, so the total time is $O(n^3 m)$.


    \proofPara{How to implement the dictionary $\Dictionary$.}
    For each admissible query set $R\in\AdmSubsets$, the set $\lightholes(R)$ of light holes has size at most three.
    It follows that the signature $\tau(R) = (\ell, r, k, \lightholes(R))$ has size $O(1)$
    and one way to implement the dictionary $\Dictionary$ (to support constant-time lookup)
    is to use a hash table with universal hashing.
    Then the algorithm uses space $O(n^2 m)$, but is randomized.
    If a deterministic implementation is needed,
    one can implement the dictionary by storing an $n\times n\times n$ matrix $\Matrix$ of buckets
    It follows that the signature $\tau(R) = (\ell, r, k, \lightholes(R))$ has size $O(1)$
    such that a given bucket $\Matrix[\ell, r, k]$
    holds the records for the admissible query sets $R$
    with signatures of the form $\tau(R) = (\ell, r, k, H')$ for some $H'$.
    Organize the records in this bucket using a trie (prefix tree) of depth 3
    keyed by the (sorted) keys in $H'$.
    This still supports constant-time access, but increases the space to $O(n^3 m)$.
    More generally, for any $d\ge 1$, one can represent each element $k\in [n]$ within each set $H'$
    as a sequence of $\lceil \log_2(n)/d\rceil$ $d$-bit words,
    then use a trie with alphabet $\{0,1,\ldots,2^d-1\}$ and depth at most $3\lceil \log_2(n)/d\rceil$.
    Then space is $\Theta(2^d n^2 m)$ while the access time is $\Theta(\log(n)/d)$.
    For example, we can take $d = \lceil \epsilon\log_2 n \rceil$ for any constant $\epsilon$
    to achieve space $O(n^{2+\epsilon} m)$
    and access time $\Theta(1/\epsilon) = \Theta(1)$.
    Or we can take $d=1$ and achieve space $O(n^2 m)$ and access time $\Theta(\log n)$,
    increasing the total time to $O(n^3 m \log n)$.
  \end{proof}

	Per Lemma~\ref{lemma: time}, the preprocessing takes time $O(n^3m)$,
	and for each of $O(n^2m)$ sets $R\in\AdmSubsets$ Recurrence~\ref{recurrence}
	can be evaluated in time $O(n)$.  This proves Theorem~\ref{thm: algorithm}.
\end{proof}

\paragraph{Remarks.}

In the common case that $\calC$ partitions $Q$, each query $q\in Q$ is contained in just one class $c\in \calC$
(so $m = n$ and the algorithm runs in time $O(n^4)$), and then
the algorithm can be implemented to use space $O(n^2 m) = O(n^3)$.
To do this, in the above implementation of the dictionary using a matrix $\Matrix$ of buckets,
each bucket $\Matrix[\ell, r, k]$ stores the records of at most four sets,
so no prefix tree is needed to achieve constant access time and space.

We note without proof that there is a deterministic variant of the algorithm that uses space $O(n^2 m)$
and time $O(n^3 m)$.
This variant is more complicated, so we chose not to present it.


\paragraph{Extending the algorithm to other inequality tests.}
Our model considers decision trees that use less-than and equality tests.
Allowing the negations of these tests is a trivial extension.
(E.g., every greater-than-or-equal test $\test \ge k$
is equivalent by swapping the children to the less-than test $\test < k$.)
We note without proof that our results also extend easily to the model that allows less-than-or-equal tests
(of the form $\test \le k$).
Such tests only need to be accounted for
in the proof of Theorem~\ref{thm: intermediate};
the extended algorithm then allows such tests in Recurrence~\ref{recurrence}.



\paragraph{Acknowledgements.}
Thanks to Mordecai Golin and Ian Munro for introducing us to the problem and for useful discussions.



\bibliographystyle{ACM-Reference-Format}
\bibliography{bib}

\end{document}